\newtheorem{theorem}{\bf Theorem}[subsection]
\newtheorem{definition}{\bf Definition}[subsection]
\newtheorem{assumption}{\bf Assumption}[subsection]
\newtheorem{proposition}{\bf Proposition}[subsection]
\newtheorem{lemma}{\bf Lemma}[subsection]
\begin{document}

% --- Main Article Title Block ---
\title{A Gentle Introduction to Conformal Time Series Forecasting}

\author{%%%% Author details
M. Stocker$^{1}$, W. Małgorzewicz $^{2}$, M. Fontana$^{2}$,  S. Ben Taieb $^{3,4}$
\vspace{0.5cm} \\
\small $^{1}$Karlsruhe Institute of Technology, Germany\\
\small $^{2}$Royal Holloway, University of London, United Kingdom\\
\small $^{3}$Mohamed bin Zayed University of Artificial Intelligence, Abu Dhabi, United Arab Emirates\\
\small $^{4}$University of Mons, Belgium
}

\date{} % No date
\maketitle

% --- Main Article Abstract ---
\begin{abstract}
Conformal prediction is a powerful post-hoc framework for uncertainty quantification that provides distribution-free coverage guarantees. However, these guarantees crucially rely on the assumption of exchangeability. This assumption is fundamentally violated in time series data, where temporal dependence and distributional shifts are pervasive. As a result, classical split-conformal methods may yield prediction intervals that fail to maintain nominal validity. This review unifies recent advances in conformal forecasting methods specifically designed to address nonexchangeable data. We first present a theoretical foundation, deriving finite-sample guarantees for split-conformal prediction under mild weak-dependence conditions. We then survey and classify state-of-the-art approaches that mitigate serial dependence by reweighting calibration data, dynamically updating residual distributions, or adaptively tuning target coverage levels in real time. Finally, we present a comprehensive simulation study that compares these techniques in terms of empirical coverage, interval width, and computational cost, highlighting practical trade-offs and open research directions.
\end{abstract}

% --- Main Article Body ---

\section{Overview of Conformal Prediction}
\label{sec:intro}

As underlined by the most recent reviews on the subject, the quantification and formalisation of prediction uncertainty is a key challenge in forecasting \cite{petropoulos_forecasting_2022}. A simple point forecast, $\hat{y}$, which provides a single-value estimate for a quantity $y \in \mathbb{R}$, is inherently limited. For any continuous random variable $Y$, the probability of the outcome being \emph{exactly} the point forecast is zero (i.e., $\mathbb{P}(Y=\hat{y})=0$). Therefore, the true practical value of forecasting, particularly for robust decision-making and risk assessment, lies not in identifying a single "most likely" value, but in quantifying the full spectrum of possible outcomes. The ultimate objective is to move beyond single-point estimates toward probabilistic forecasts that describe the entire predictive distribution of future quantities \cite{gneiting_probabilistic_2014, gneiting_probabilistic_2007}

%The ultimate goal is to move beyond a single estimate to a full probabilistic forecast, which provides an entire probability distribution for the future quantity \cite{gneiting_probabilistic_2014, gneiting_probabilistic_2007}.

The challenge, however, lies in generating these distributions accurately. Traditionally, uncertainty quantification has relied on strong parametric assumptions, such as imposing a Gaussian (Normal) distribution on the forecast errors \cite{elliott_economic_2016}. This assumption is frequently violated in real-world applications. Many phenomena, particularly in economics, finance, and anthropogenic systems, do not follow "mild" randomness. They are instead characterised by heavy tails (leptokurtosis), significant skewness, and non-linear dependencies, which the Gaussian distribution fundamentally fails to capture \cite{cont_empirical_2001, taleb_black_2007}. This mismatch often leads to a drastic underestimation of risk and a false sense of security in model predictions.

In this context, Conformal Prediction (CP) has emerged as a powerful and principled alternative to parametric methods \cite{vovk_algorithmic_2023}. It provides a distribution-free framework that wraps around any point forecasting model, from simple linear regressions to deep normalising flow models, and calibrates its predictions to produce sets with rigorous, finite-sample predictive coverage guarantees. Let $\{ (X_t, Y_t) \}_{i=1}^T$ be a sample of \(T\) random covariate/response pairs with stationary marginals. Each pair \((X_t, Y_t) = Z_t\) takes values in \(\mathcal{X} \times \mathcal{Y}\), where \(\mathcal{X}\) and \(\mathcal{Y}\) are measurable spaces. Our objective is, for given a miscoverage level $\alpha \in (0,1)$, to construct a prediction set $\mathcal{C}_{1-\alpha}(X_{T+1})$ for a new, unobserved, $Y_{T+1}$ such that the following validity property holds:
\begin{equation}
    \mathbb{P}\!\big( Y_{T+1} \in \mathcal{C}_{1-\alpha}(X_{T+1}) \big) \geq 1-\alpha 
\end{equation}
where the probability $\mathbb{P}_{\mathrm{tr}}$ is taken over  $\{ (X_t, Y_t) \}_{i=1}^T \cup Z_{T+1} $
The methodologies under the CP umbrella provide techniques and algorithms to identify regions endowed with the validity property described above. Two main families of methodologies are available: Full (or Inductive) CP and Split (or Transductive) CP. The two methodologies distinguish themselves according to how they treat the training data, and how intensive is their computational load. Apart from this general distinction, our goal in this paper is not to describe analogies and differences between the two methodologies. The interested reader can refer to the already cited main text on the subject, as well as to several introductions and reviews \cite{fontana_conformal_2023,angelopoulos_conformal_2023-5}.

We will focus our attention on Split, or Transductive CP (SCP), firstly introduced in \cite{papadopoulos_inductive_2002}, and analysed with a remarkable level of detail in \cite{lei_distribution-free_2018}. This choice is due to the specific nature of the algorithms analysed in this review, mainly focused on a Split framework.  The recipe is as follows:

First, we partition the data indices into two disjoint sets: a training set \(I_{\mathrm{train}}\) and a calibration set \(I_{\mathrm{cal}}\).
Second, we use the training set to fit our model of choice, $\hat{\mu}$, which learns a function from $\mathcal{X}$ to $\mathcal{Y}$.
Third, we define a nonconformity score function $s(x, y)$, which quantifies how "strange" or "nonconforming"\footnote{It is in principle possible to define conformity, rather than non-conformity scores, that measure "conformity" with respect to the original data} a given data pair $(x, y)$ is with respect to the training set $\{(x_t,y_t), t\in I_{\mathrm{train}}\}$. For regression problems, traditionally absolute residuals, $s(X_i, Y_i) = |Y_i - \hat{\mu}(X_i)|$ are used, but other choices like quantile-based scores (\cite{romano_conformalized_2019}) and Density/HDR-based ones (\cite{izbicki_cd-split_2022}) are possible. The key idea is that higher scores should correspond to data that fits the model poorly.
We compute the nonconformity scores for all points in the calibration set, creating a set of "typical" errors: $S_{\mathrm{cal}} = \{s_i \mid i \in I_{\mathrm{cal}}\}$.
Fifth, to achieve a target coverage of $1-\alpha$, we find the empirical $(1-\alpha)$quantile of these calibration scores. Specifically, letting $n_{\mathrm{cal}} = |I_{\mathrm{cal}}|$, we compute:
\[
\hat{q}_{1-\alpha} = \textit{Quantile}(S_{\mathrm{cal}}, \lceil (1-\alpha)(n_{\mathrm{cal}}+1) \rceil / n_{\mathrm{cal}})
\].
This $\hat{q}_{1-\alpha}$ represents the error threshold that $(1-\alpha)$ of the calibration points did not exceed.
Finally, for a new point $X_{T+1}$, we construct the prediction set by inverting the score function. We include all possible values $y \in \mathcal{Y}$ whose nonconformity score is no larger than our threshold $\hat{q}_{1-\alpha}$:
\[
\mathcal{C}_{1 - \alpha}(X_{T+1}) = \{ y \in \mathcal{Y} : s(X_{T+1}, y) \leq \hat{q}_{1-\alpha} \}
\]
When using the absolute residual score, this definition simplifies to the familiar interval: $$\mathcal{C}_{1 - \alpha}(X_{T+1}) = [\hat{\mu}(X_{T+1}) - \hat{q}_{1-\alpha}, \hat{\mu}(X_{T+1}) + \hat{q}_{1-\alpha}].$$

If the data in $I_{\mathrm{cal}}$ and the new test point $(X_{\text{test}}, Y_{\text{test}})$ are exchangeable, this simple procedure provides the powerful guarantee of finite-sample marginal coverage: $\mathbb{P}_{tr}(Y_{T+1} \in \mathcal{C}_{1 - \alpha}(X_{T+1})) \ge 1-\alpha$.

The \textit{exchangeability} assumption in this context is key This property is the key theoretical mechanism that enables the methods to provide guaranteed finite-sample coverage (i.e., validity) without making any specific parametric assumptions about the underlying data-generating process \cite{angelopoulos_theoretical_2025}.

In the absence of exchangeability, it is not trivial anymore to establish validity properties. This complication motivates the central questions driving the research in this area: Can we preserve the distribution-free guarantees of CP when predicting non-exchangeable data? How? And, in doing so, what theoretical properties or practical efficiencies must we trade off? 

This review synthesizes and unifies the modern approaches developed to solve this problem, mainly in the context of time-series data. We will collectively call this branch of CP, \emph{Conformal Forecasting}. We aim to move beyond a simple survey of algorithms. Instead, we provide a structured classification of these methods, providing an harmonised notation, and a grouping by the core philosophy they employ to address non-exchangeability.

Our contributions are threefold: (i) a practical and narrative synthesis of these baseline algorithms and their modern variants; (ii) a controlled empirical comparison that maps the validity–efficiency–compute trade-offs of these competing strategies and (iii) a deeper, unified theoretical rework of finite-sample coverage guarantees for standard SCP under checkable, weak-dependence ($\beta$-mixing) conditions; (with full proofs in the Appendixs), providing additional details with respect to the original work \cite{oliveira_split_2024}.

The review is structured as follows. Section \ref{sec:nonexTS} formalizes how time series data violate the exchangeability assumption through temporal dependence and distribution shifts. Section \ref{sec:nonexchangeable} presents theoretical guarantees for SCP under weak dependence. Section \ref{sec:adaptive_methods} reviews the four main families of adaptive conformal forecasting methods: Weighted CP (WCP), EnbPI, Adaptive CP (ACI), and Block CP (BCP), providing pseudocode for each. Section \ref{sec:emp} details and discusses the results of a comprehensive empirical study comparing these methods on simulated data. Finally, Section \ref{sec:conclusion} concludes with practical recommendations and a discussion of limitations and future research directions.

\section{Non-Exchangeability in Time Series Data}
\label{sec:nonexTS}
We argued in the previous section that the good properties of CP are fundamentally based on an exchangeability assumption. We provide a more formal definition below.

\begin{definition}[Exchangeability]\label{def:exchangeability}
A finite sequence of random elements $\{Z_i\}_{i=1}^T$ is \emph{exchangeable} if its joint law is invariant under any permutation $\pi$ in a generic set $\mathcal{P}_n$:
\[
(Z_1,\dots,Z_T)\ \stackrel{d}{=}\ (Z_{\pi(1)},\dots,Z_{\pi(T)}) \quad \text{for all permutations }\pi \in \mathcal{P}_n.
\]
\end{definition}

This assumption, although weaker than IID, is typically violated in time series settings. By definition, time series data are ordered. This temporal order is not a nuisance but the very structure that carries meaningful information. Such ordering may break exchangeability in several fundamental ways:
%This assumption, while weaker than IID, is normally violated in the time series context. Time series data are, by definition, ordered. This order is not a nuisance but the very source of the information we wish to model. This ordered nature violates exchangeability in the following primary ways:

\begin{enumerate}
    \item \textbf{Temporal Dependence (Non-Independence):} Let $\{Z_t\}_{t=1}^T$ be a stochastic process, representing the data, defined on a probability space $(\Omega, \mathcal{F}, \mathbb{P})$. Each random variable $Z_t$ takes values in a measurable state space $(G, \mathcal{G})$ (e.g., $(\mathbb{R}, \mathcal{B}(\mathbb{R}))$, where $\mathcal{B}(\mathbb{R})$ is the Borel $\sigma$-algebra on the real line).

    The process $\{Z_t\}$ exhibits \textbf{temporal dependence} if random variables $Z_1, \dots, Z_T$ are not mutually independent. This concept is formalized by considering the information available up to a certain point in time, which is represented by the natural filtration $\mathcal{F}_{t-1} = \sigma(Z_1, \dots, Z_{t-1})$.

    A sequence of random variables is \textbf{independent} if, for all $t$, the conditional distribution of $Z_t$ given the entire past history $\mathcal{F}_{t-1}$ is identical to its marginal distribution. Formally, independence requires:
    $$
    \mathbb{P}(Z_t \in g \mid \mathcal{F}_{t-1}) = \mathbb{P}(Z_t \in g) \quad \text{for all } g \in \mathcal{G} \text{ (almost surely)}
    $$
    Conversely, the process is \textbf{temporally dependent} if this equality fails to hold for some $t$ and some set $A$. This means the past provides information for predicting $Z_t$:
    $$
    \mathbb{P}(Z_t \in g \mid \mathcal{F}_{t-1}) \neq \mathbb{P}(Z_t \in g)
    $$
    This dependence is the central feature of time series models where the past informs the future. For example:
    \begin{itemize}
        \item In \textbf{ARMA processes}, the conditional mean $\mathbb{E}[Z_t \mid \mathcal{F}_{t-1}]$ depends on $\mathcal{F}_{t-1}$.
        \item In \textbf{GARCH processes}, the conditional variance $\text{Var}(Z_t \mid \mathcal{F}_{t-1})$ depends on $\mathcal{F}_{t-1}$.
    \end{itemize}

   \item \textbf{Distribution Shift (Non-Stationarity):} We again consider the stochastic process $\{Z_t\}_{t=1}^T$ defined on $(\Omega, \mathcal{F}, \mathbb{P})$ and taking values in $(S, \mathcal{S})$. Let $P_t$ denote the marginal probability distribution (or "law") of the observation $Z_t$, defined as:
    $$
    P_t(A) = \mathbb{P}(Z_t \in A) \quad \text{for all } A \in \mathcal{S}
    $$
    A sequence of random variables is \textbf{identically distributed} if this marginal law is invariant with respect to time; that is, $P_t = P_s$ for all $t, s \in \{1, \dots, T\}$.

    The process exhibits a \textbf{distribution shift}, or is \textbf{non-stationary} (specifically, not stationary in its marginal distribution), if the "identically distributed" property fails. This means there exist at least two time points $t \neq s$ for which the marginal distributions are not equal:
    $$
    P_t \neq P_s
    $$
    This implies that for some set $A \in \mathcal{S}$, the probability $\mathbb{P}(Z_t \in A)$ is not constant in $t$. This directly violates a necessary condition for the sequence to be independent and identically distributed (i.i.d.). As exchangeability requires that the sequence be identically distributed, this condition also violates exchangeability.

    This non-stationarity can manifest in several forms:
    \begin{itemize}
        \item \textbf{Abrupt Shifts (Breaks):} The distribution changes at a specific point $T_0$, thus $P_{T_0-1}\neq P_{T_0} $
        \item \textbf{Gradual Drift:} The parameters of $P_t$ (e.g., its mean $\mathbb{E}[Z_t]$ or variance $\text{Var}(Z_t)$) evolve slowly and systematically with $t$.
        \item \textbf{Periodic Patterns (Seasonality):} The distribution follows a recurring pattern, e.g., $P_t = P_{t+k}$ for some period $k$, but $P_t \neq P_{t+1}$.
    \end{itemize}
\end{enumerate}

It is crucial to distinguish this specific temporally-structured non-exchangeability from other violations encountered in static machine learning. A common case, for instance, is covariate shift. In that setting, data $(X_t, Y_t)$ is often assumed to be independent \textit{within} the training and test sets, but the marginal distribution of the covariates $P_X$ differs between them (i.e., $P_{X_{\text{train}}} \neq P_{X_{\text{test}}}$), even if the conditional $P_{Y|X}$ remains invariant. Another violation is given by panel data settings, where data points are correlated \textit{within} a group (e.g., multiple measurements from the same patient or samples from the same batch) but the groups themselves are independent and exchangeable. In these settings, the data index is a nominal label, and the non-exchangeability arises from a latent group structure. In the time-series context, the index $t$ is fundamentally ordinal. The violations are defined by this sequence: temporal dependence relates $Z_t$ to $Z_{t-k}$, and distribution shift makes $P_t$ a function of $t$ itself. This sequential structure, governed by proximity and direction, is the unique challenge of conformal forecasting.
As a comment, since the filtration $\mathcal{F}_{t-1}$ is order-dependent, permuting the observations would give a different joint law, thus breaking exchangeability.

To better visualise these concepts, we provide some examples.

\begin{itemize}
    \item \textbf{Example 1: Lack of Independence (Temporal Dependence).}
    Consider forecasting daily temperature, a stationary autoregressive process where $Z_t$ is conditionally dependent on its history $\mathcal{F}_{t-1} = \sigma(Z_1, \dots, Z_{t-1})$. Due to the specific nature of atmospheric phenomena, temperature is a "sticky"; a very hot day ($Z_{t-1}$) is likely followed by another hot day ($Z_t$). Now, imagine our calibration set, $I_{\text{cal}}$, happens to be drawn from a long, stable period where temperatures were consistently mild. The resulting residuals in $S_{\text{cal}}$ will all be small, leading to a small threshold $\hat{q}_{1-\alpha}$. If our test set begins just after a sudden, rare heat spike, the temporal dependence means the first test point $Z_{T+1}$ is also likely to be hot. Our model, calibrated on "mild" data, will under-predict, causing a very large residual $s_{T+1}$. The sequence is not exchangeable because we cannot swap the "post-heat-spike" test point $Z_{T+1}$ with a "mild" calibration point $Z_i$ without breaking the temporal structure of the process. The test score $s_{T+1}$ is not a random draw from the same error pool as $S_{\text{cal}}$.

    \item \textbf{Example 2: Distribution Shift (Non-Stationarity).}
    Consider a model forecasting daily users for a new website, a non-stationary process where the marginal distribution $P_t$ changes. We use data from January to November (regime $P^{J,\ldots,N}$) as our calibration set $I_{\text{cal}}$ to compute our error quantile $\hat{q}_{1-\alpha}$. On December 1st, the website is featured in a viral video, and its average daily traffic permanently triples. This is a distribution shift, or a structural break, where the marginal law abruptly changes to $P^D$. Our test set, $I_{\text{test}}$, begins on December 1st. The model, calibrated on the pre-viral data , is now systematically wrong; its predictions are far too low, and the scores $s_t$ are consistently very big. The sequence is not exchangeable because the test points $\{Z_t\}_{t \in I_{\text{test}}}$ are drawn from a completely different data-generating process ($P^D$) than the calibration points $\{Z_i\}_{i \in I_{\text{cal}}}$ ($P^{J,\ldots,N}$). The temporal order is critical, and $\hat{q}_{1-\alpha}$ is not representative of anything.

    \item \textbf{Example 3: Conditional Heteroscedasticity (Volatility Clustering).}
    Consider a stochastic process characterized by time-dependent conditional variance, such as a GARCH process \cite{bollerslev_generalized_1986} , where $\sigma_t^2 = \mathrm{Var}(Y_t \mid \mathcal{F}_{t-1})$ evolves dynamically given the filtration $\mathcal{F}_{t-1}$. This is common in financial data, like daily stock returns, which exhibit \emph{volatility clustering}. "Panic" days, characterised by high variance are followed by more high-variance days, and  "calm" periods of low-variance are followed by more calm days. Suppose the calibration set $I_{\text{cal}}$ is sampled from a regime of low volatility (quiescence), resulting in a set of nonconformity scores $S_{\text{cal}}$ with low dispersion and a correspondingly small empirical quantile $\hat{q}_{1-\alpha}$. If the test set $I_{\text{test}}$ coincides with a "volatility cluster" (a period where $\sigma_t^2$ increases significantly) the magnitude of the test residuals $s_t$ will scale proportionally with the localized standard deviation. Even if the conditional mean estimator $\hat{\mu}$ remains unbiased, the fixed interval width determined by $\hat{q}_{1-\alpha}$ will be insufficient to accommodate the expanded support of the error distribution. The sequence is not exchangeable because the marginal distribution of the residuals is not invariant; the probability density of a value $s_t$ is functionally dependent on the latent volatility state at time $t$, rendering the global quantile derived from the low-variance $I_{\text{cal}}$ invalid for the high-variance test regime.
\end{itemize}

\subsection{Guarantees under Non-Exchangeability}
\label{sec:nonexchangeable}
We have described above how the defining characteristics of time series data ,namely temporal dependence and distribution shifts ,fundamentally violate the exchangeability assumption required in standard conformal prediction.

Yet, in the presence of mild violations, the validity property of Conformal is only mildly affected, and in specific cases such violation can be computed.

In the Appendix, we present a detailed theoretical analysis of this degradation, following the framework of \cite{oliveira_split_2024}. We show that under general assumptions of weak dependence (Assumptions \ref{assump:A1_app}-\ref{assump:A3_app} in the Appendix), the coverage gap, defined as the non-negative difference $G := \max\left(0, (1-\alpha) - \mathbb{P}(Y \in \mathcal{C}_{1-\alpha}(X))\right)$, representing the shortfall between the nominal coverage $1-\alpha$ and the true coverage probability, is explicitly bounded by a small constant. For any test point $i \in I_{\mathrm{test}}$, the coverage is:
\[
\mathbb{P}_{\mathrm{tr}} \left[ Y_i \in \mathcal{C}_{1 - \alpha}(X_i) \right] 
    \geq 1 - \alpha - \underbrace{(\varepsilon_{\mathrm{cal}} + \delta_{\mathrm{cal}} + \varepsilon_{\mathrm{train}})}_{\text{Slack term}}
\]
These slack terms have intuitive meanings: $(\varepsilon_{\mathrm{cal}}, \delta_{\mathrm{cal}})$ bound the concentration error (how well the $n_{\mathrm{cal}}$ calibration scores represent the "true" error distribution), while $\varepsilon_{\mathrm{train}}$ bounds the decoupling (how much dependence exists between the training set and the test point). See Theorem \ref{theo 1_app} in Appendix \ref{app:theory} for the full statement and proof.

We further show in Theorem \ref{sec:beta-mixing_app} in Appendix \ref{app:theory} how to derive explicit, non-asymptotic bounds for these slack terms in the specific case of stationary $\beta$-mixing processes (Propositions \ref{prop: 13_app}-\ref{prop: 15_app}). The $\beta$-mixing coefficient $\beta(a)$ quantifies how quickly a process "forgets" its past; if $\beta(a)$ decays rapidly, our slack terms become small, and the coverage guarantee $1-\alpha - \eta$ approaches the nominal $1-\alpha$.

The crucial takeaway is that for stationary, weakly dependent processes, standard SCP is approximately valid, and its deviations from validity are indeed very mild. The true problem is non-stationarity (distribution shift), against which these theoretical results offer limited protection. The following methods are designed to explicitly address both strong dependence and, more importantly, distribution shift.

\section{Conformal Forecasting Methods}
\label{sec:adaptive_methods}

In case we have dependencies that are more severe than the weak cases considered by \cite{oliveira_split_2024} several methodologies, with very different philosophical approaches, have appeared in the literature.

These conformal forecasting techniques are designed to restore valid coverage by explicitly handling temporal dependence and distribution shift. We first propose a narrative classification of methods according to their core philosophy, and then provide to state the algorithms in detail.
In a nutshell, methodologies can be based on:

\begin{itemize}
  \item \textbf{Reweighting:} Establish a calibration set but assign higher importance to points that are deemed more "relevant" to the current test point. This includes Weighted CP (WCP / Nex-CP, \cite{barber_conformal_2023}).
  \item \textbf{Refreshing:} Actively update the calibration set, typically using a sliding window to discard old, "stale" residuals and incorporate new ones. This includes Ensemble Batch Prediction Intervals (EnbPI, \cite{xu_conformal_2021}).
  \item \textbf{Adapting Coverage:} Establish a calibration set but dynamically update the target error rate $\alpha_t$ online, using a feedback loop to force the long-run coverage to match the user's target. This includes Adaptive Conformal Inference (ACI, \cite{gibbs_adaptive_2021,zaffran_adaptive_2022}).
  \item \textbf{Blocking:} Redefine the fundamental unit of randomization. Instead of assuming individual points are exchangeable, assume that entire blocks of data can be permuted. This includes Block CP (BCP, \cite{chernozhukov_exact_2018}).
\end{itemize}

We now proceed analysing the different groups one by 

\subsection{Weighted CP (WCP)}
\label{sec:wcp}

The first family of methods adapts to non-exchangeability by challenging the assumption that all calibration points are created equal. If the calibration set is no longer representative, perhaps we can re-weight its elements to prioritize points that are more relevant to the current prediction.

This is formalized by replacing the standard empirical quantile with a weighted empirical quantile. Given non-negative weights \(\{w_i\}_{i\in I_{\mathrm{cal}}}\) for each calibration point, we can normalized to sum to one: $\widetilde w_i = w_i / \sum_j w_j$. The weighted quantile is then defined as the smallest score $t$ that captures at least $1-\alpha$ of the weighted mass:
\[
\hat q^{(w)}_{1-\alpha}
:= \inf\Big\{ \tilde{s} \in \mathbb{R} : \sum_{i\in I_{\mathrm{cal}}} \widetilde w_i\,\mathbf{1}\{s(X_i,Y_i) \le \tilde{s}\} \ge 1-\alpha \Big\},
\]
and the resulting prediction set is $\mathcal{C}^{(w)}_{1-\alpha}(x) = \{ y : s(x,y) \le \hat q^{(w)}_{1-\alpha} \}$. Algorithm \ref{alg:NexCP_single_specific} summarises the non-exchangeable CP (Nex-CP) framework.

\begin{algorithm}[H]
\caption{Weighted Conformal Prediction}
\label{alg:NexCP_single_specific}
\begin{algorithmic}[1]
\Require Data $\{(X_t,Y_t)\}_{t=1}^T$; new covariate $X_{T+1}$; miscoverage level $\alpha$; base forecaster $\hat\mu$; Two index sets $I_{\text{train}},I_{\text{cal}}$ such that $I_{\text{train}}\cup I_{\text{cal}} = \{1,\ldots,T\}$; weights $\{w_i\}_{i\in I_{\mathrm{cal}}}$
\State Fit $\hat\mu$ on $I_{\mathrm{train}}$
\State Compute residuals $\varepsilon_i = |Y_i - \hat\mu(X_i)|$ for all $i \in I_{\mathrm{cal}}$
\State Compute normalised weights $\tilde w_i = \dfrac{w_i}{\sum_{j\in I_{\mathrm{cal}}} w_j}$ for $i\in I_{\mathrm{cal}}$
\State $\hat{q}^{(w)}_{1-\alpha} =$ empirical weighted $(1-\alpha)$-quantile of $\{\varepsilon_i\}_{i\in I_{\mathrm{cal}}}$
\State $\hat{Y}_{T+1} = \hat\mu(X_{T+1})$; \quad $\mathcal{C}^{(w)}_{1-\alpha}(X_{T+1}) = [\,\hat{Y}_{T+1} - \hat{q}^{(w)}_{1-\alpha},\ \hat{Y}_{T+1} + \hat{q}^{(w)}_{1-\alpha}\,]$
\State \Return $\mathcal{C}^{(w)}_{1-\alpha}(X_{T+1})$
\end{algorithmic}
\end{algorithm}

As shown by \cite{barber_conformal_2023}, the coverage gap of this method is bounded (see Theorem \ref{thm:wcp_app} in the Appendix). The bound is small if large weights $\widetilde{w}_i$ are assigned to calibration points $i$ that are "distributionally similar" to the test point.

The critical design choice, of course, is how to set the weights.
\begin{itemize}
    \item \textbf{Predefined Weights (Nex-CP):} The simplest approach is to use a fixed, predefined heuristic. The most common is exponential decay, $w_i \propto \rho^{t_m - t_i}$ (for $\rho \in (0,1)$), which embodies the simple idea that "the recent past matters most." A simpler version of this is the idea of thesliding window, where $w_i=1$ for the $k$ most recent points and $w_i=0$ for all others. These are computationally cheap but can fail if the process has long-term dependencies or if a past, distant event is more relevant than the recent past.
   
    \item \textbf{Conformal Risk Control (CRC):} \cite{angelopoulos_conformal_2023-1} provide a framework to generalize CP beyond the simple binary coverage loss to control any bounded, monotone risk (loss function). For example, a user might care more about the size of the interval than the coverage, or vice versa. \cite{farinhas_non-exchangeable_2024} connect this to the non-exchangeable setting by incorporating weights, providing a principled framework to choose weights that explicitly optimize the bound on the target risk, moving beyond simple heuristics.
   
    \item \textbf{Learned, Content-Based Weights:} The limitation of time-based weights is that they fail if a "heatwave" is more similar to an event from two years ago than from two days ago. We need \textit{content-based} retrieval. \textbf{Hop-CPT} \cite{auer_conformal_2023} implements this using modern Hopfield networks as an associative memory. The network stores patterns from the calibration set. When a new test point $X_t$ arrives, the network retrieves the most \textit{similar} past examples, and these similarity scores are used as the weights $w_i$. \textbf{CT-SSF} \cite{chen_conformalized_2024} achieves a similar goal by first mapping the input data $X_t$ into a "semantic feature" space $f(X_t)$ using a neural network. It then uses the network's internal attention mechanism to find calibration points with semantically similar features, using the attention scores as weights for calibration.
\end{itemize}

\subsection{Updating the Residual Distribution (EnbPI)}
\label{sec:enbpi}

Instead of re-weighting a fixed calibration set, a second class of methods adapts to new test input by actively refreshing the set of residuals $\{\hat{\varepsilon}_i\}^T_{i=1}$, in this case computed without the need of splitting training and calibration. The goal is to ensure the quantile $\hat{q}_{1-\alpha}$ is always based on the most recent, and therefore most relevant, error distribution.

The main representative is EnbPI (Ensemble Batch Prediction Intervals) by \cite{xu_conformal_2021}. EnbPI avoids the static train/calibration split by using bootstrap ensembles and out-of-bag (OOB) predictions. The procedure is as follows:

\begin{enumerate}
    \item \textbf{Training:} Instead of one model, we train an ensemble of $M$ bootstrap models (e.g., $M=25$). Each model $\hat{\mu}^{(m)}$ is fit on the data (e.g., points $1, \dots, T$) by sampling with replacement.
    \item \textbf{OOB Residuals:} For each point $i \in \{1, \dots, T\}$, some models in the ensemble did not see $(X_i, Y_i)$ during their training (they are "out-of-bag" for $i$). We create an OOB prediction $\hat{\mu}_{OOB}(X_i)$ by aggregating \textit{only} those models. This is a crucial step: it provides a quasi-out-of-sample prediction for $X_i$ from models that did not train on it, perfectly mimicking the logic of a train/calibration split.
    \item \textbf{Calibration:} We then compute the OOB residuals, $\hat{\varepsilon}_i$ for all points: $\hat{\varepsilon}_i = |Y_i - \hat{\mu}_{OOB}(X_i)|$ for all $i=1, \dots, T$. This collection $\{\hat{\varepsilon}_i\}_{i=1}^T$ forms our initial calibration set.
    \item \textbf{Prediction:} For a new test point $X_{T+1}$, we get a prediction $\hat{\mu}(X_{T+1})$ by aggregating \textit{all} $B$ models (since none have seen this point). We then compute the quantile $\hat{q}_{1-\alpha}$ from our pool of OOB residuals $\{\hat{\varepsilon}_i\}_{i=1}^T$. The interval is $[\hat{\mu}(X_{T+1}) - \hat{q}, \hat{\mu}(X_{T+1}) + \hat{q}]$.
    \item \textbf{Updating:} This is the key adaptation. EnbPI is run in a sliding window. After a "batch" of $\delta$ new points are observed, their new OOB residuals are computed and added to the pool, while the $s$ \textit{oldest} residuals are discarded. This ensures the residual pool "refreshes" and gradually forgets the distant past, allowing it to adapt to distribution shifts.
\end{enumerate}

EnbPI provides approximate marginal coverage under stationarity and mixing conditions \cite{xu_conformal_2021}. Its main trade-off is computational: it requires training and storing $B$ models and, in its sequential form, re-calculating OOB residuals, which is significantly more expensive than SCP or WCP.

\begin{algorithm}[H]
\caption{EnbPI (Ensemble Batch Prediction Intervals)}
\label{alg:EnbPI_M}
\begin{algorithmic}[1]
\Require Data $\{(X_i,Y_i)\}_{i=1}^{T}$; base forecaster $\hat\mu$; miscoverage $\alpha$; aggregation method $\varphi$; number of resamples $M$; batch size $\delta$; test data $\{(X_t,Y_t)\}_{t=T+1}^{T+T_1}$ with $Y_t$ revealed only after each batch of size $\delta$ is constructed.
\For{$m=1 \rightarrow M$}
  \State Sample with replacement an index multiset $I_m$ from $\{1,\dots,T\}$.
  \State Fit $\hat\mu^{(m)}$ on $\{(X_i,Y_i): i\in I_m\}$.
\EndFor
\State \textbf{Initialize $\varepsilon = \{\}$:}
\For{$i=1\rightarrow T$}
  \State $O_i \leftarrow \{\,m:\ i\notin S_m\,\}$ \hfill (OOB models for $i$)
  \State \textbf{if} $O_i\neq\emptyset$ \textbf{then} $\hat \mu(X_i) = \varphi\big(\{\hat\mu^{(m)}(X_i): m\in O_i\}\big)$ 
  \par \textbf{else} $\hat \mu(X_i) = \varphi\big(\{\hat\mu^{(m)}(X_i)\}_{m=1}^M\big)$
  \State $\varepsilon_i = |Y_i- \hat\mu(X_i)|$; \State \textbf{Update} $\varepsilon = \varepsilon \ \cup \ \{\varepsilon_i$\}
\EndFor
\State \textbf{Initialize} $\mathcal{C} = \{\}$
\For{$t=T+1 \rightarrow T+T_1$}
  \State $\hat \mu(X_t) = \varphi \left( \{\hat\mu^{(m)}(X_t)\}_{m=1}^M \right)$
  \State $\hat{q}_{1-\alpha, t} = (1-\alpha)$ quantile of $\varepsilon$
  \State $\mathcal{C}_{1-\alpha}(X_t) = \big[\hat \mu(X_t)-\hat{q}_{1-\alpha, t},\ \hat \mu(X_t)+\hat{q}_{1-\alpha, t}\big]$
  \State \textbf{Update} $\mathcal{C} =  \mathcal{C} \ \cup \ \{\mathcal{C}_{1-\alpha}(X_t)\}$
  \If{$(t-T)\bmod s = 0$}
    \For{$j=t-s \rightarrow t-1$}
      \State observe $Y_j$;\quad compute    $\ \varepsilon_j = |Y_j-\hat \mu(X_j)|$
      \State \textbf{Update} $\varepsilon = (\varepsilon - \{\varepsilon_1\}) \cup \{\varepsilon_j\}$ and reset index of $\varepsilon$
    \EndFor
  \EndIf
\EndFor
\State \textbf{return} $\mathcal{C}$
\end{algorithmic}
\end{algorithm}

This approach has been powerful and influential, inspiring several key extensions. A major limitation of EnbPI is that it ignores heteroskedasticity: it assumes all residuals are drawn from the same (shifting) pool. It produces a single interval width for all $X_t$.
SPCI (Sequential Predictive Conformal Inference) by \cite{xu_sequential_2023} fixes this. It replaces the global residual quantile with a conditional one. Instead of asking "How big are residuals on average?", it asks, "Given the features $X_t$, how big is the residual likely to be?" It uses Quantile Random Forests (QRF) to learn a function $\hat{Q}_{\tau}(\varepsilon \mid X_t)$ that maps features to a residual quantile. This allows the interval to be naturally wider for volatile inputs and tighter for stable inputs, providing a much sharper and more adaptive forecast.
This idea was further modernized by SPCI-T \cite{lee_transformer_2024}, which replaces the QRF module with a state-of-the-art Transformer architecture, allowing the conditional quantile model to capture more complex and long-range temporal dependencies.

\subsection{Updating the Coverage Rate via Adaptation}
\label{sec:aci}

The final family of methods takes a different approach. It keeps the indices of the original calibration set $I_{\text{cal}}$ fixed, unlike \emph{EnbPI}, and does not introduce modifications to the quantile computation, unlike \emph{WCP}. What it updates online is the target miscoverage level $\alpha$ itself.

The first example of these algorithms is ACI (Adaptive Conformal Inference), proposed by \cite{gibbs_adaptive_2021} for sequential settings where the true $Y_t$ is revealed after each prediction. It maintains a running "effective" miscoverage level, $\alpha_t$, which it updates at every time step based on its past performance. It is, in essence, a feedback controller.

At time $t$, the algorithm produces an interval $\mathcal{C}_{1-\alpha_t}(X_t)$ using the current level $\alpha_t$ and the fixed calibration set $\{Z_i,i\in I_{\text{cal}}\}$. After observing the true $Y_t$, it updates the level for the next step using a simple additive rule:
\begin{equation}
\label{ACI_eq}
\alpha_{t+1} \ =\ \alpha_t + \gamma \Bigl(\alpha - \mathbf{1}\{ Y_t \notin {\mathcal C}_{1-\alpha_t}(X_t)\}\Bigr),
\end{equation}
where $\gamma > 0$ is a step-size parameter, or learning rate.

The logic is intuitive. $\alpha$ is the target level, and for each time step, the algorithm observes a binary error $\mathbf{1}\{Y_t \notin \mathcal{C}\}$.
\begin{itemize}
    \item \textbf{If miscoverage occurs} ($Y_t \notin \mathcal{C}$): The term in parentheses is $(\alpha - 1)$, which is negative. $\alpha_{t+1}$ decreases. This means $1-\alpha_{t+1}$ \textit{increases} (e.g., from 90\% to 91\%). The next quantile $\hat{q}_{1-\alpha}$ will be larger, and the next interval wider. The system self-corrects by becoming more conservative.
    \item \textbf{If coverage occurs} ($Y_t \in \mathcal{C}$): The term is $(\alpha - 0) = \alpha$, which is positive. $\alpha_{t+1}$ \textit{increases}. This means $1-\alpha_{t+1}$ decreases (e.g., from 90\% to 89.9\%). The next quantile will be smaller, and the interval tighter. The system self-corrects by becoming more efficient.
\end{itemize}
This feedback loop (see Algorithm \ref{alg:ACI}) provides a guarantee that is different from the classical CP guarantee: the long-run empirical miscoverage rate is proven to converge to $\alpha$, regardless of the data-generating process (see Theorem \ref{thm:aci_app} in Appendix B). It can handle arbitrary distribution shifts, as long as feedback is provided.

\begin{algorithm}[H]
\caption{ACI (Adaptive Conformal Inference)}
\label{alg:ACI}
\begin{algorithmic}[1]
\Require Data $\{(X_t,Y_t)\}_{t=1}^T$; sequential test data $\{(X_t, Y_t)\}_{t=T+1}^{T+T_1}$; miscoverage $\alpha$; base forecaster $\hat\mu$; step size $\gamma$; Index sets $I_{\text{train}}, I_{\text{cal}}$
\State Fit $\hat\mu$ on $I_{\mathrm{train}}$
\State Compute fixed calibration scores $\mathcal{E}_{\mathrm{cal}} = \{|Y_i - \hat\mu(X_i)| \mid i \in I_{\mathrm{cal}}\}$
\State Let $n_{\mathrm{cal}} = |I_{\mathrm{cal}}|$
\State Initialize effective level $\alpha_{T+1} = \alpha$
\State Initialize prediction sets $\mathcal{C} = \{\}$
\For{$t = T+1$ \textbf{to} $T+T_1$}
    \State \Comment{Compute quantile using the *current* level $\alpha_t$}
    \State $\hat{q}_{1-\alpha_t} = \textit{Quantile}(\mathcal{E}_{\mathrm{cal}}, \lceil (1-\alpha_t)(n_{\mathrm{cal}}+1) \rceil / n_{\mathrm{cal}})$
    \State Get prediction $\hat{Y}_t = \hat\mu(X_t)$
    \State Construct interval $\mathcal{C}_{1-\alpha_t}(X_t) = [\hat{Y}_t - \hat{q}_{1-\alpha_t}, \hat{Y}_t + \hat{q}_{1-\alpha_t}]$
    \State Append $\mathcal{C}_{1-\alpha_t}(X_t)$ to $\mathcal{C}$
    \State \Comment{Wait for true $Y_t$ to be revealed to get feedback}
    \State Compute error $e_t = \mathbf{1}\{ Y_t \notin \mathcal{C}_{1-\alpha_t}(X_t)\}$
    \State Update level for next step: $\alpha_{t+1} = \alpha_t + \gamma (\alpha - e_t)$
\EndFor
\State \Return $\mathcal{C}$
\end{algorithmic}
\end{algorithm}

The main practical challenge of ACI is choosing the step size $\gamma$. A large $\gamma$ (high learning rate) adapts very quickly to shifts but can be unstable and oscillate wildly. A small $\gamma$ is stable and converges smoothly but adapts too slowly to abrupt shifts. Several extensions have been proposed to solve this:
\begin{itemize}
    \item \textbf{AgACI:} \cite{zaffran_adaptive_2022} (Aggregated ACI) removes the need to pick one $\gamma$ by running multiple ACI "experts" in parallel, each with a different $\gamma_k$. It then uses an online expert aggregation algorithm (like Bernstein Online Aggregation) to form a weighted average of their outputs, adaptively trusting the "expert" $\gamma_k$ that has performed best on the recent past.
   
    \item \textbf{Time-Dependent Step Sizes:} \cite{angelopoulos_online_2024} propose a more standard optimization approach, using a time-dependent $\gamma_t$ that decays (e.g., $\gamma_t \propto 1/\sqrt{t}$). This allows the algorithm to be responsive at the beginning and stabilize as it gathers more data, though it may be slow to react to a late-stage shift.
   
    \item \textbf{Conformal PID Control:} \cite{angelopoulos_conformal_2023-2} provides the most sophisticated update rule by reframing ACI as a control problem. The update includes Proportional (P), Integral (I), and Derivative (D) terms. The P-term is standard ACI (reacts to current error). The I-term sums past errors (reacts to systematic bias, e.g., "we've been under-covering by 2\% for 50 steps"). The D-term is a "scorecaster" that anticipates error (e.g., "it's Friday, and errors are always higher on Fridays"). This feed-forward component is especially powerful for handling predictable shifts like seasonality.
\end{itemize}

\subsection{Block CP (BCP)}
\label{sec:bcp}

A final, conceptually distinct approach is Block CP (BCP), introduced in \cite{chernozhukov_exact_2018}. This method redefines the object of randomization, positing that while individual data points $Z_t$ are not exchangeable, entire blocks of data may be (approximately) exchangeable.

The original transductive formulation is computationally intensive, as it requires re-estimating the model for every candidate future and every permutation. This procedure is generally considered "unfeasible" in high-dimensional or functional data settings \cite{ajroldi_conformal_2023}. A scalable and computationally efficient alternative is the Split-Conformal BCP, which adapts the blocking scheme to the inductive (split) framework \cite{ajroldi_conformal_2023, diquigiovanni_distribution-free_2024}.

This split-BCP procedure modifies the standard BCP algorithm as follows:

\begin{enumerate}
    \item \textbf{Split and Train:} The data is partitioned into a training set, with indices $I_{\mathrm{train}}$ and a calibration set, with indices$I_{\mathrm{cal}}$. A model $\hat{\mu}$ is fit once on the training set and is then held fixed.
   
    \item \textbf{Define Blocks and Permutations:} A block size $B$ is selected. A family of permutations, $\Pi$, is defined to act only on the indices of the calibration set $I_{\mathrm{cal}} \cup \{T+1\}$. The training set $I_{\mathrm{train}}$ remains invariant under all permutations \cite{ajroldi_conformal_2023, diquigiovanni_distribution-free_2024}.
   
    \item \textbf{Define Nonconformity Score:} A nonconformity score $s(x,y)$ is defined based on the fixed model $\hat{\mu}$ (e.g., $s(X_t, Y_t) = |Y_t - \hat{\mu}(X_t)|$).
   
    \item \textbf{Calculate p-value:} For a candidate future $y$, its nonconformity score is $R_{T+1} = s(X_{T+1}, y)$. A set of permuted scores $\{R_{\pi}\}$ is generated by applying the permutations $\pi \in \Pi$. The p-value, $\hat{p}(y)$, is the fraction of permuted scores that are greater than or equal to the score of the candidate:
    \[
    \hat p(y) \;=\; \frac{1}{|\Pi|} \sum_{\pi \in \Pi}
    \mathbf{1}\!\left\{ R_{\pi} \geq R_{T+1} \right\}.
    \]
    (Here, $R_{\pi}$ refers to the score of the permuted block that lands at the test position, $Z_{\pi(T+1)}$, which is evaluated using the fixed model $\hat{\mu}$ trained on $I_{\mathrm{train}}$ \cite{ajroldi_conformal_2023}).
   
    \item \textbf{Prediction Set:} The final $(1-\alpha)$ CP set is formed by inverting this p-value test:
    \[
    \mathcal{C}_{1-\alpha}^B = \{\, y : \hat p(y) > \alpha \,\}.
    \]
\end{enumerate}

This split-BCP approach loses the exact finite-sample validity of the transductive method. However, it retains robust theoretical guarantees, providing approximate validity and asymptotic exactness under weak dependence conditions (e.g., strong mixing or ergodicity) \cite{ajroldi_conformal_2023, diquigiovanni_distribution-free_2024}. A significant practical advantage is that, with an appropriate choice of nonconformity score, the set $\mathcal{C}_{1-\alpha}^B$ can often be computed in closed form, avoiding the infeasible search over all possible $y$ \cite{ajroldi_conformal_2023, diquigiovanni_distribution-free_2024}.

\section{A Simulation-Based Experimental Comparison}
\label{sec:emp}

We now empirically compare the main conformal methods (SCP, WCP, ACI, and EnbPI, SCP-Block) on simulated time series data. We evaluate each on test coverage, average interval width, and computational cost. The code can be found in \cite{malgorzewicz_intro--cp-for-time-series-forecasting_2025}.

\subsection{Data-Generating Processes}
We study four canonical processes: two stationary, $\beta$-mixing benchmarks, one non-exchangeable process designed to break the methods, and one heteroscedastic model. We generate $n=900$ pairs $(X_t,Y_t)$, split into train/cal/test of (300, 300, 300), and repeat $R=50$ times. The covariate $X_t = (Y_{t-1},\ldots,Y_{t-p})$ collects past lags.

\begin{itemize}
    \item \textbf{AR(1)}: A simple, weakly dependent process.
    \[
      Y_t = 0.8\,Y_{t-1} + \varepsilon_t, \qquad \varepsilon_t \sim \mathcal{N}(0,1).
    \]
    \item \textbf{ARMA(1,1)}: A stationary process with slightly more complex memory.
    \[
      Y_t = 0.5\,Y_{t-1} + \varepsilon_t + 0.4\,\varepsilon_{t-1}, \qquad \varepsilon_t \sim \mathcal{N}(0,1).
    \]
    \item \textbf{Mean shift} (non-exchangeable): A process with an abrupt, permanent distribution shift. The shift occurs at $t^\star = 601$, the first point in the test set.
    \[
      Y_t = \mu_t + \varepsilon_t,\quad 
      \mu_t = \begin{cases}
        \mu_0, & t \le 600,\\
        \mu_0+1.0, & t > 600,
      \end{cases}
      \qquad \varepsilon_t \sim \mathcal{N}(0,1).
    \]
    \item \textbf{GARCH(1,1)}: A heteroscedastic process with an autoregressive structure within itself.
    \[
      Y_t = \varepsilon_t \sqrt{(0.3+0.5 Y^2_{t-1} + 0.1)}
      \qquad \varepsilon_t \sim \mathcal{N}(0,1).
    \]
\end{itemize}

\paragraph{Base forecaster}
To isolate the effect of the conformal post-processing, all methods use the same simple autoregression fit by least squares (AR-LS). The model $\hat Y_t = \langle a, X_t\rangle$ is fit \textit{once} on the training block $I_{\mathrm{train}}$ and its coefficients $a$ are held fixed. This is a crucial design choice: the forecaster itself does \textit{not} adapt to the mean shift, forcing the conformal layer to do all the work.

\subsection{Methods and Metrics}
All methods use absolute residual scores $|Y_t-\hat\mu(X_t)|$ with target coverage $1-\alpha=0.9$.

\begin{itemize}
    \item \textbf{SCP}: Standard split conformal, quantile computed once on $I_{\mathrm{cal}}$ (points 301-600).
    \item \textbf{Blocked SCP}: Split conformal prediction using non-overlapping blocks, with block sizes $B\in\{2,3\}$
    \item \textbf{WCP} (Nex-CP): Three fixed weighting schemes on $I_{\mathrm{cal}}$: (1) Exponential decay ($\rho=0.99$), (2) Linear ramp (more weight to recent), (3) Sliding window (only last 50 points, 551-600).
    \item \textbf{EnbPI}: $B=25$ bootstrap AR-LS models (trained on 1-300, OOB residuals on 301-600); mean aggregation; sliding residual pool with refresh frequency $s\in\{1,10,100\}$.
    \item \textbf{ACI}: Step sizes $\gamma\in\{0.001,\,0.005,\,0.01\}$; $\alpha$ updated at each test step, quantile recomputed from the fixed $I_{\mathrm{cal}}$ (301-600).
\end{itemize}

\paragraph{Metrics}
For each method on the test block (points 601-900), we compute: (1) \textbf{Coverage} (empirical frequency $\frac{1}{300}\sum \mathbf{1}\{Y_t \in \mathcal{C}_t\}$), (2) \textbf{Average width} ($\frac{1}{300}\sum |\mathcal{C}_t|$), and (3) \textbf{Wall time}.

\subsection{Results and Discussion}

We plot mean coverage vs. mean width for each process (Figure \ref{fig:vis}) and a single bar chart for average runtime (Figure \ref{fig:vis-runtime}). For each process, we additionally plot the associated error bars, using a 95\% confidence interval around the mean coverage and mean width. 

The empirical results distinguish the performance of the conformal strategies across the different data-generating processes.

In the stationary settings (AR(1), ARMA(1,1), and GARCH(1,1)), the results are consistent. As shown in Figure \ref{fig:vis}, most tested methods, including SCP, all WCP variants, all ACI variants, and all EnbPI variants, achieve empirical coverage very close to the nominal $0.9$ target. The notable exception is SCP-block, which visibly under-covers in all three stationary scenarios, failing to reach the nominal target. Among the valid methods, primary differences are in statistical efficiency (interval width). EnbPI consistently produces the widest intervals, a likely result of the variance introduced by its bootstrap-based procedure. In contrast, SCP, WCP, and ACI are more efficient, yielding tighter intervals of comparable widths. For stable, stationary processes, these results suggest the baseline SCP is a sufficient, efficient, and valid method.

The non-stationary setting (Mean-Shift) reveals significant performance disparities. Here, several methods fail. SCP's coverage degrades to approximately 0.84, and SCP-block's coverage falls to \~0.81-0.84. The WCP-window method also fails, with coverage dropping to ~0.81. Their calibrated quantiles, $\hat{q}_{1-\alpha}$, are computed from the pre-shift calibration data ($I_{\mathrm{cal}}$) and are therefore invalid estimates for the post-shift regime. These methods, blind to the model's new systematic error, continue to produce overly narrow and invalid intervals.

In contrast, the other adaptive methods successfully handle the abrupt shift. ACI (all $\gamma$ values), EnbPI (all $s$ values), WCP-exp, and WCP-linear all maintain coverage at or near the nominal 0.9 level. Their success is attributable to their explicit adaptation mechanisms. ACI uses an active feedback loop to widen intervals. EnbPI's sliding window refreshes its residual pool, while WCP-exp and WCP-linear successfully

% --- Figures for the Visualization section ---
\begin{figure}[!htbp]
  \centering
  \includegraphics[width=1\linewidth]{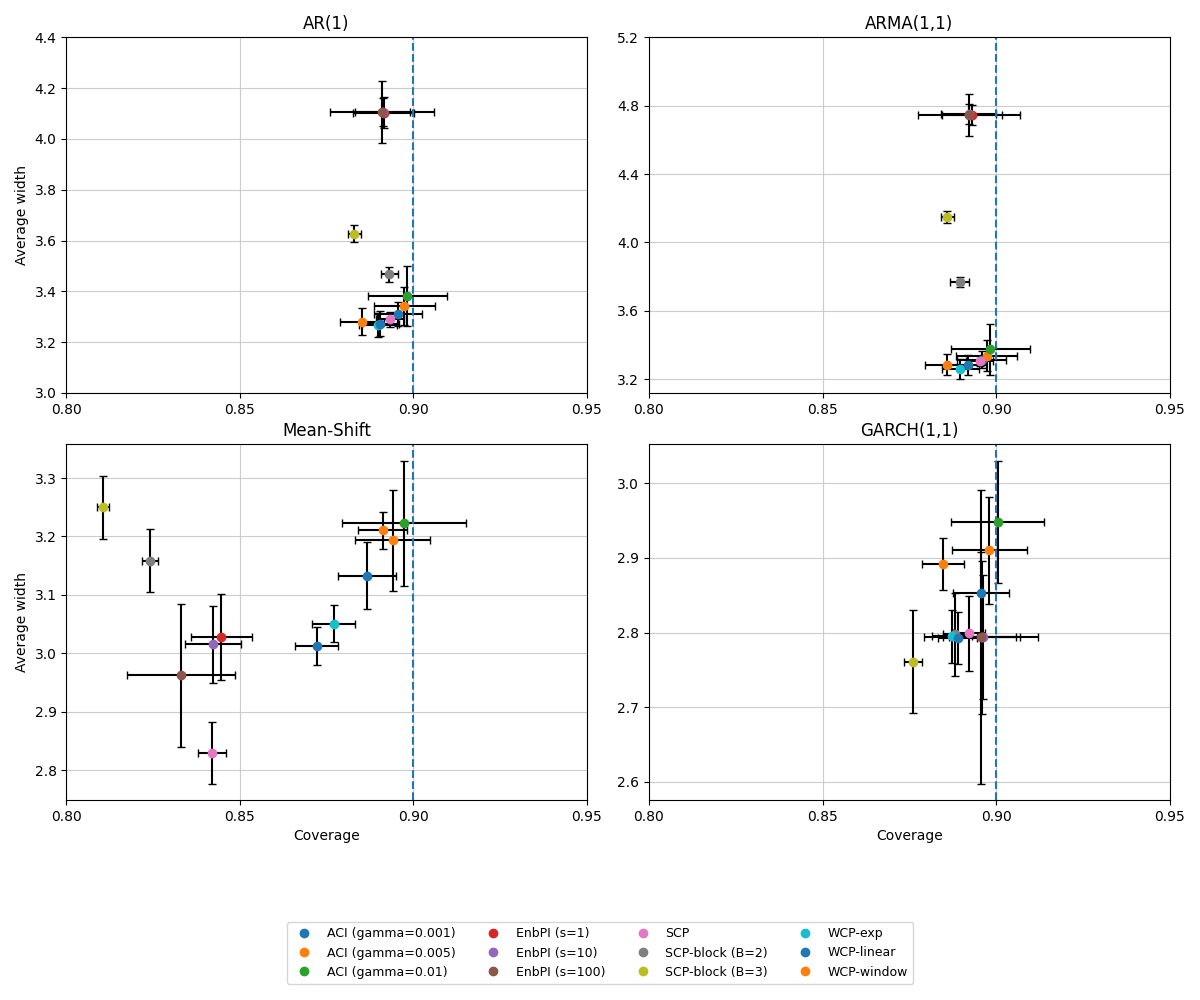}
  \caption{Coverage vs. width for different Data Generating Processes. The vertical line marks the $1-\alpha=0.9$ target, and the error bars a confidence interval of 95\%.}
  \label{fig:vis}
\end{figure}

\begin{figure}[htbp]
  \centering
  \includegraphics[width=.8\linewidth]{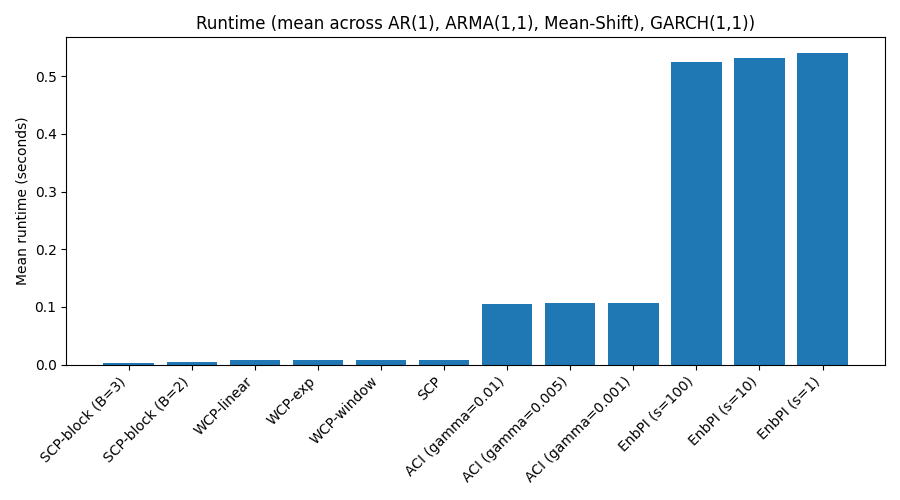}
  \caption{Average runtime by method (aggregated across processes and runs).}
  \label{fig:vis-runtime}
\end{figure}

\section{Conclusion}
\label{sec:conclusion}

Classical CP relies on the assumption of exchangeability, which is fundamentally violated by time series data due to temporal dependence and distribution shifts. This review synthesizes and evaluates modern conformal forecasting methods designed to address this limitation.

The findings indicate that methods adapt via four primary mechanisms: reweighting calibration data (WCP), refreshing the residual pool (EnbPI), adapting the target coverage level online (ACI), or blocking the data (SCP-block). The theoretical analysis (Appendix \ref{app:theory}) confirms that for weakly-dependent, stationary ($\beta$-mixing) processes, standard SCP provides approximately valid coverage. The empirical study supports this, showing that SCP, WCP, ACI, and EnbPI all achieve nominal coverage on stationary data. However, the study also reveals two key failures: (1) SCP-block failed to provide valid coverage even in the simple stationary settings, and (2) standard SCP and some adaptive variants (like WCP-window) fail under an abrupt distribution shift. In such non-stationary settings, methods with explicit recency-focused adaptation (ACI, EnbPI, and WCP-exp/linear) were all shown to successfully restore nominal coverage.

\subsection*{Recommendations for Practice}
Based on this analysis, the choice of method should be guided by the properties of the data stream and operational constraints.
\begin{itemize}
    \item For stable, stationary processes with weak dependence, SCP provides a valid, efficient, and computationally inexpensive baseline. The added complexity of most adaptive methods is unnecessary.
   
    \item When non-stationarity (e.g., abrupt shifts) is anticipated, several methods are robust. The choice involves a clear trade-off between speed and complexity:
    \begin{itemize}
        \item \textbf{WCP} (with exponential or linear decay) offers a highly practical solution. It was shown to be robust to the shift while remaining as computationally inexpensive as standard SCP.
        \item \textbf{ACI} is also a robust method, using active feedback to maintain coverage. It is moderately more costly, as it requires re-calculating quantiles at each step.
        \item \textbf{EnbPI} is robust but comes with a significant computational cost (training an ensemble) and tends to produce wider intervals. Its conditional variants (like SPCI) are powerful but should only be considered if this computational overhead is acceptable.
        item 
        \item \textbf{SCP-block} showed weak performance in this specific study, failing to achieve nominal coverage even in the stationary, $\beta$-mixing scenarios. Its practical application may require more careful tuning of block size.
    \end{itemize}
   
\end{itemize}

\subsection*{Limitations and Future Directions}
The review of the theory, while not being exhaustive, serves as a good starting point in order to formalise and systematise the growing body of knowledge on Conformal Forecasting. The simulation study shows few applicative cases, and serves as minimalistic baseline comparison to start comparing different methods. Such minimalism highlight avenues for future research. 

First, the use of a simple, fixed AR-LS model was intended to isolate the effect of the conformal layer. In practice, more complex prediction methods (e.g., re-trained ARIMA or neural models) would reduce the magnitude of the residuals, improving the efficiency of all methods and potentially altering their relative performance. 
In addition, more sophisticated scoring functions could be employed—ones capable of producing not only sharper prediction regions but also asymptotic conditional validity properties (see, e.g., \cite{izbicki_cd-split_2022, chernozhukov_distributional_2021})—and evaluated empirically.

%As an addition to this, more sophisticated scores, capable of yielding not only sharper prediction regions, but also other, maybe asymptotic conditional validity properties (see e.g. \cite{izbicki_cd-split_2022,chernozhukov_distributional_2021}) could be used, and tested

Secondly, the empirical evaluation was limited to simple processes. Further testing is required on data with more complex structures, strong seasonality, long memory and multi-step forecast horizons. 
Thirdly, we mainly focused our attention to the univariate case. The issue of multivariate time series forecasting in a Conformal Setting joins two very new areas of research (see e.g. \cite{dheur_unified_2025}). all simulation examples are assumed to be stationary. 
The deviations from exchangeability considered in this analysis are also fairly simple, and an exploration of methods for conformal forecasting data with locally stationary, or plainly nonstationary time series data. Third, the hyperparameter sweeps were coarse; the performance of ACI, EnbPI, WCP, and SCP-block is sensitive to their respective parameters ($\gamma$, $s$, $\rho$, $B$), and a comprehensive optimization study would be required to establish a definitive performance ranking. Finally, this study focused on canonical baselines and did not empirically test more advanced variants, such as learned-weight WCP (e.g., Hop-CPT) or advanced ACI (e.g., AgACI, PID control), which remain promising areas for benchmarking

The outlook for conformal forecasting involves the development of hybrid methods that combine these adaptive strategies. Examples include integrating ACI feedback loop with SPCI conditional quantile estimation, or pairing WCP with weights learned with more advanced models, as well as the identification of fixed or learned weight schemes for more complex dependency stuctures. As data complexity increases, the demand for computationally tractable uncertainty quantification that is robust to non-exchangeability will continue to grow.

Moreover, while our attention has been focused on the classical, interval prediction setting, very interesting and recent results (\cite{allen_-sample_2025}) pave the way for extensions of Conformal forecasting methods to distributional forecasting.

\section*{Acknowledgments}
The Authors would like to thank Prof. Dr. Mathias Trabs and Prof. Rainer Von Sachs for comments on early drafts of this manuscript.

\clearpage
% ###################################################################
% ###################################################################
% ###                                                             ###
% ###          START OF AppendixS                   ###
% ###                                                             ###
% ###################################################################
% ###################################################################

\appendix

% $$$$$$$$$$$$$$$$$$$$$$$$$$$$$$$$$$$$$$$$$$$$$$$$$$$$$$$$$$$$$

\section{Theoretical Guarantees for SCP under Weak Dependence}
\label{app:theory}

This appendix provides the detailed theoretical results summarized in Section \ref{sec:nonexchangeable}. Unless otherwise stated, the assumptions, theorems, definitions, propositions and lemmas follow \cite{oliveira_split_2024}, of which we provide a deeper and clearer explanation; we adapt notation where necessary.

\subsection{Setup and Notation}

We consider the following setting:
\begin{itemize}
    \item Let \((X_i, Y_i)_{i=1}^n\) be a sample of \(n\) random covariate/response pairs with stationary marginals.
    \item We also consider an independent random pair \((X_*, Y_*)\) (independent of the sample) such that \((X_i, Y_i) \stackrel{d}{=} (X_*, Y_*)\) for all \(i \in [n]\).
    \item Let \(s : (\mathcal{X} \times \mathcal{Y})^{n_{\mathrm{train}} + 1} \to \mathbb{R}\) be a function defining a nonconformity score. For any \((x, y)\), define:
    \[
    \hat{s}_{\mathrm{train}}(x, y) := s\left( (X_i, Y_i)_{i \in I_{\mathrm{train}}}, (x, y) \right).
    \]
    \item For \(\alpha \in (0,1)\), define the empirical quantile of the calibration scores:
    \begin{equation}
    \hat{q}_{1-\alpha,\mathrm{cal}} := \inf \left\{ t \in \mathbb{R} : \frac{1}{n_{\mathrm{cal}}} \sum_{i \in I_{\mathrm{cal}}} \mathbf{1} \left\{ \hat{s}_{\mathrm{train}}(X_i, Y_i) \leq t \right\} \geq 1-\alpha \right\}.
    \label{eq:q_cal_app}
    \end{equation}
    \item For any \(x \in \mathcal{X}\), define the prediction set:
    \[
    \mathcal{C}_{1-\alpha}(x) := \left\{ y \in \mathcal{Y} : \hat{s}_{\mathrm{train}}(x, y) \leq \hat{q}_{1-\alpha,\mathrm{cal}} \right\}.
    \]
    \item Let $\mathcal{F}_{\mathrm{train}} := \sigma\big((X_i,Y_i)_{i\in I_{\mathrm{train}}}\big)$. We use $\mathbb{P}_{\mathrm{tr}}(\cdot) := \mathbb{P}(\cdot \mid \mathcal{F}_{\mathrm{train}})$ to denote probabilities conditional on the training data.
    \item Define the true conditional CDF:
    \begin{equation}
    P_{q_{\mathrm{train}}} := \mathbb{P}\left[ \hat{s}_{\mathrm{train}}(X_*, Y_*) \leq q_{\mathrm{train}} \mid \mathcal{F}_{\mathrm{train}} \right].
    \label{eq:p_qtrain_app}
    \end{equation}
\end{itemize}

\subsection{Main Assumptions for Weak Dependence}

We assume the following conditions hold, which enable coverage guarantees under mild non-exchangeability.

\begin{assumption}[A1: Calibration concentration]\label{assump:A1_app}
There exist \(\delta_{\mathrm{cal}} \in (0,1)\) and \(\varepsilon_{\mathrm{cal}} \in (0,1)\) such that, for every training-dependent threshold $q_{\mathrm{train}}$:
\begin{equation}
    \mathbb{P}_{\mathrm{tr}} \left( \left| \frac{1}{n_{\mathrm{cal}}} \sum_{i \in I_{\mathrm{cal}}} \mathbf{1} \left\{ \hat{s}_{\mathrm{train}}(X_i, Y_i) \leq q_{\mathrm{train}} \right\} - P_{q_{\mathrm{train}}} \right| \leq \varepsilon_{\mathrm{cal}} \right) \geq 1 - \delta_{\mathrm{cal}}.
    \label{eq:cal-concentration_app}
\end{equation}
\end{assumption}

This states that the empirical CDF of calibration scores concentrates around the true CDF.

\begin{assumption}[A2: Test decoupling]\label{assump:A2_app}
There exists \(\varepsilon_{\mathrm{train}} \in (0,1)\) such that, for all \(i \in I_{\mathrm{test}}\) and every training-dependent threshold $q_{\mathrm{train}}$:
\begin{equation}
    \left| \mathbb{P}_{\mathrm{tr}} \left[ \hat{s}_{\mathrm{train}}(X_i, Y_i) \leq q_{\mathrm{train}} \right] - \mathbb{P}_{\mathrm{tr}} \left[ \hat{s}_{\mathrm{train}}(X_*, Y_*) \leq q_{\mathrm{train}} \right] \right| \leq \varepsilon_{\mathrm{train}}.
    \label{eq:test-decoupling_app}
\end{equation}
\end{assumption}
This ensures the test scores behave similarly to an independent score, bounding the dependence between train and test sets.

\begin{assumption}[A3: Test concentration]\label{assump:A3_app}
There exist \(\delta_{\mathrm{test}} \in (0,1)\) and \(\varepsilon_{\mathrm{test}} \in (0,1)\) such that, for every training-dependent threshold $q_{\mathrm{train}}$:
\begin{equation}
    \mathbb{P}_{\mathrm{tr}} \left( \left| \frac{1}{n_{\mathrm{test}}} \sum_{i \in I_{\mathrm{test}}} \mathbf{1} \left\{ \hat{s}_{\mathrm{train}}(X_i, Y_i) \leq q_{\mathrm{train}} \right\} - P_{q_{\mathrm{train}}} \right| \leq \varepsilon_{\mathrm{test}} \right) \geq 1 - \delta_{\mathrm{test}}.
    \label{eq:test-concentration_app}
\end{equation}
\end{assumption}
This is the same as (A1) but for the test set.

\paragraph{Conditional Assumptions.}
For conditional coverage, we require analogous assumptions, (A4) and (A5), that hold uniformly over subsets $A \in \mathcal{A}$.

\begin{assumption}[A4: Conditional calibration concentration]\label{assump:A4_app}
There exist $\varepsilon_{\mathrm{cal}}, \delta_{\mathrm{cal}} \in (0,1)$ such that, for every $q_{\mathrm{train}}$:
\[
\mathbb{P}_{\mathrm{tr}}\!\left[
    \sup_{A \in \mathcal{A}}
    \left|
        \frac{1}{\max\{n_{\mathrm{cal}}(A),1\}}
        \sum_{i \in I_{\mathrm{cal}}(A)}
            \mathbf{1}\{\hat{s}_{\mathrm{train}}(X_i,Y_i) \leq q_{\mathrm{train}}\}
        - P_{q,\mathrm{train}}(A)
    \right|
    \leq \varepsilon_{\mathrm{cal}}
\right] \geq 1 - \delta_{\mathrm{cal}}.
\]
\end{assumption}

\begin{assumption}[A5: Conditional test decoupling]\label{assump:A5_app}
There exists $\varepsilon_{\mathrm{train}} \in (0,1)$ such that, for all $A \in \mathcal{A}$, $i \in I_{\mathrm{test}}$ and every $q_{\mathrm{train}}$:
\[
\big|
\mathbb{P}_{\mathrm{tr}}[\hat{s}_{\mathrm{train}}(X_i,Y_i) \leq q_{\mathrm{train}} \mid X_i \in A]
- \mathbb{P}_{\mathrm{tr}}[\hat{s}_{\mathrm{train}}(X_*,Y_*) \leq q_{\mathrm{train}} \mid X_* \in A]
\big|
\leq \varepsilon_{\mathrm{train}}.
\]
\end{assumption}

\subsection{Coverage Guarantees and Proofs}

\begin{theorem}[Marginal coverage under non-exchangeability]
\label{theo 1_app}
Let $\alpha \in (0,1)$, and suppose conditions \hyperref[assump:A1_app]{(A1)} and \hyperref[assump:A2_app]{(A2)} hold. Then, for all $i \in I_{\mathrm{test}}$,
\begin{equation}
    \mathbb{P}_{\mathrm{tr}} \left[ Y_i \in \mathcal{C}_{1 - \alpha}(X_i) \right] 
    \geq 1 - \alpha - \varepsilon_{\mathrm{cal}} - \delta_{\mathrm{cal}} - \varepsilon_{\mathrm{train}}.
    \label{eq:marginal-coverage_app}
\end{equation}
Additionally, if $\hat{s}_{\mathrm{train}}(X_*, Y_*)$ has a continuous distribution almost surely, conditional on the training data, then:
\begin{equation}
    \left| \mathbb{P}_{\mathrm{tr}} \left[ Y_i \in \mathcal{C}_{1 - \alpha}(X_i) \right] - (1 - \alpha) \right| 
    \leq \varepsilon_{\mathrm{cal}} + \delta_{\mathrm{cal}} + \varepsilon_{\mathrm{train}}.
\end{equation}
\end{theorem}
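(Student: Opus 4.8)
The plan is to reduce the coverage event to a one-dimensional inequality on the scores, and then to bridge the data-dependent empirical quantile by a deterministic (training-measurable) population quantile at which Assumptions \hyperref[assump:A1_app]{(A1)}--\hyperref[assump:A2_app]{(A2)} can be legitimately invoked. Writing $s_i := \hat{s}_{\mathrm{train}}(X_i,Y_i)$ and $\hat q := \hat{q}_{1-\alpha,\mathrm{cal}}$, the definition of the prediction set gives the exact equivalence $\{Y_i \in \mathcal{C}_{1-\alpha}(X_i)\} = \{s_i \le \hat q\}$, so the target is a lower bound on $\mathbb{P}_{\mathrm{tr}}[s_i \le \hat q]$. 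The difficulty is that $\hat q$ is a functional of the \emph{calibration} scores, not of $\mathcal{F}_{\mathrm{train}}$ alone, so neither (A1) nor (A2) may be applied at $t=\hat q$ directly. I would therefore introduce the training-measurable lower population quantile $q^- := \inf\{t \in \mathbb{R} : P_t \ge 1-\alpha-\varepsilon_{\mathrm{cal}}\}$, which by monotonicity and right-continuity of $t \mapsto P_t$ satisfies $P_{q^-} \ge 1-\alpha-\varepsilon_{\mathrm{cal}}$.

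First I would show that $\hat q$ does not undershoot $q^-$ except on a $\delta_{\mathrm{cal}}$-event. By the defining property of the empirical quantile in \eqref{eq:q_cal_app}, the empirical CDF satisfies $\frac{1}{n_{\mathrm{cal}}}\sum_{i\in I_{\mathrm{cal}}}\mathbf{1}\{s_i \le \hat q\} \ge 1-\alpha$; invoking (A1) at the \emph{training-measurable} threshold $q^-$ and comparing the empirical and true CDFs near $q^-$ forces, with $\mathbb{P}_{\mathrm{tr}}$-probability at least $1-\delta_{\mathrm{cal}}$, that the empirical CDF cannot already reach level $1-\alpha$ strictly below $q^-$ (that would require an empirical-minus-true gap exceeding $\varepsilon_{\mathrm{cal}}$ there). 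Hence $\mathbb{P}_{\mathrm{tr}}[\hat q \ge q^-] \ge 1-\delta_{\mathrm{cal}}$.

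On the event $\{\hat q \ge q^-\}$ we have the inclusion $\{s_i \le q^-\} \subseteq \{s_i \le \hat q\}$, so
\[
\mathbb{P}_{\mathrm{tr}}[s_i \le \hat q] \;\ge\; \mathbb{P}_{\mathrm{tr}}\big[\{s_i \le q^-\}\cap\{\hat q \ge q^-\}\big] \;\ge\; \mathbb{P}_{\mathrm{tr}}[s_i \le q^-] - \delta_{\mathrm{cal}}.
\]
Since $q^-$ is training-measurable, Assumption (A2) applies at $q^-$ and gives $\mathbb{P}_{\mathrm{tr}}[s_i \le q^-] \ge \mathbb{P}_{\mathrm{tr}}[\hat{s}_{\mathrm{train}}(X_*,Y_*) \le q^-] - \varepsilon_{\mathrm{train}} = P_{q^-} - \varepsilon_{\mathrm{train}} \ge 1-\alpha-\varepsilon_{\mathrm{cal}}-\varepsilon_{\mathrm{train}}$. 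Substituting yields the claimed lower bound $\mathbb{P}_{\mathrm{tr}}[s_i \le \hat q] \ge 1-\alpha-\varepsilon_{\mathrm{cal}}-\delta_{\mathrm{cal}}-\varepsilon_{\mathrm{train}}$.

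For the two-sided statement I would run the mirror-image argument with the upper population quantile $q^+ := \inf\{t : P_t \ge 1-\alpha+\varepsilon_{\mathrm{cal}}\}$: (A1) at $q^+$ gives $\mathbb{P}_{\mathrm{tr}}[\hat q \le q^+] \ge 1-\delta_{\mathrm{cal}}$, and (A2) at $q^+$ upgrades this to $\mathbb{P}_{\mathrm{tr}}[s_i \le \hat q] \le P_{q^+} + \varepsilon_{\mathrm{train}} + \delta_{\mathrm{cal}}$. Here almost-sure continuity of the conditional score distribution is essential: it removes atoms so that $P_{q^-} = 1-\alpha-\varepsilon_{\mathrm{cal}}$ and $P_{q^+} = 1-\alpha+\varepsilon_{\mathrm{cal}}$ exactly, making the two one-sided bounds symmetric and collapsing to $|\mathbb{P}_{\mathrm{tr}}[Y_i \in \mathcal{C}_{1-\alpha}(X_i)] - (1-\alpha)| \le \varepsilon_{\mathrm{cal}} + \delta_{\mathrm{cal}} + \varepsilon_{\mathrm{train}}$. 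I expect the main obstacle to be precisely the control of $\hat q$ in the step $\mathbb{P}_{\mathrm{tr}}[\hat q \ge q^-]\ge 1-\delta_{\mathrm{cal}}$: the assumptions are pointwise in a training-measurable threshold, so the crossing level of the \emph{empirical} CDF must be pinned down without evaluating (A1) at the random point $\hat q$ itself, and the boundary/atom case is exactly what continuity is invoked to rule out in the upper bound.
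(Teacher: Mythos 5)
Your proposal is correct and follows essentially the same route as the paper's proof: both introduce the training-measurable population quantile at level $1-\alpha-\varepsilon_{\mathrm{cal}}$ (your $q^-$, the paper's $q_{1-\alpha-\varepsilon_{\mathrm{cal}},\mathrm{train}}$), show via (A1) that $\hat q_{1-\alpha,\mathrm{cal}}$ dominates it off a $\delta_{\mathrm{cal}}$-event, and then apply (A2) at that deterministic threshold, with the mirror-image argument plus continuity for the two-sided bound. The one step you sketch rather than execute — pinning down that the empirical CDF cannot cross $1-\alpha$ strictly below $q^-$ without evaluating (A1) at the random point $\hat q$ — is handled in the paper by applying (A1) at $q^- - 1/\ell$ and taking a nested-events limit, which is exactly the obstacle you correctly identified.
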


\begin{proof}
We define the event
\[
F := \left\{ \hat q_{1-\alpha,\mathrm{cal}} \ \ge \ q_{1-\alpha-\varepsilon_{\mathrm{cal}},\mathrm{train}}  \right\}.
\]

\paragraph{Step 1 (Bounding $\mathbb{P}_{\mathrm{tr}}(F)$)}
By \hyperref[assump:A1_app]{(A1)}, for any $\ell \in \mathbb{N}_{>0}$, it holds with probability at least $1-\delta_{\mathrm{cal}}$ that
\[
\begin{aligned}
\frac{1}{n_{\mathrm{cal}}} \sum_{i \in I_{\mathrm{cal}}} 
\mathbf{1} \left\{ \hat{s}_{\mathrm{train}}(X_i, Y_i) 
\leq q_{1-\alpha-\varepsilon_{\mathrm{cal}},\mathrm{train}} - \frac{1}{\ell} \right\} 
&\leq \mathbb{P}_{\mathrm{tr}} \left[ \hat{s}_{\mathrm{train}}(X_*, Y_*) 
\leq q_{1-\alpha-\varepsilon_{\mathrm{cal}},\mathrm{train}} - \frac{1}{\ell} \right] 
+ \varepsilon_{\mathrm{cal}} \\
&< 1 - \alpha - \varepsilon_{\mathrm{cal}} + \varepsilon_{\mathrm{cal}} \\
&= 1 - \alpha \\
&\leq \frac{1}{n_{\mathrm{cal}}} \sum_{i \in I_{\mathrm{cal}}} 
\mathbf{1} \left\{ \hat{s}_{\mathrm{train}}(X_i, Y_i) 
\leq \hat{q}_{1-\alpha,\mathrm{cal}} \right\}.
\end{aligned}
\]
The second and last inequality holds by the definition of the quantiles.

Now define
\[
E_\ell := \left\{
\frac{1}{n_{\mathrm{cal}}} \sum_{i \in I_{\mathrm{cal}}}
\mathbf{1} \left\{ \hat{s}_{\mathrm{train}}(X_i, Y_i) \leq q_{1 - \alpha - \varepsilon_\mathrm{cal}, \mathrm{train}} - \frac{1}{\ell} \right\}
<
\frac{1}{n_{\mathrm{cal}}} \sum_{i \in I_{\mathrm{cal}}}
\mathbf{1} \left\{ \hat{s}_{\mathrm{train}}(X_i, Y_i) \leq \hat{q}_{1 - \alpha , \mathrm{cal}} \right\}
\right\}.
\]
As just shown, $\mathbb{P}_{\mathrm{tr}}[E_\ell] \geq 1 - \delta_{\mathrm{cal}}$.
By continuity from above for probabilities, $\mathbb{P}_{\mathrm{tr}} ( \bigcap_{\ell=1}^\infty E_\ell ) = \lim_{\ell \to \infty} \mathbb{P}_{\mathrm{tr}}(E_\ell) \ge 1 - \delta_{\mathrm{cal}}$.
On the intersection $\bigcap_{\ell=1}^\infty E_\ell$, we have $\hat q_{1-\alpha,\mathrm{cal}} \ge q_{1-\alpha-\varepsilon_{\mathrm{cal}},\mathrm{train}} - \frac{1}{\ell}$ for all $\ell$, which implies $\hat q_{1-\alpha,\mathrm{cal}} \ge q_{1-\alpha-\varepsilon_{\mathrm{cal}},\mathrm{train}}$.
Thus, $\mathbb{P}_{\mathrm{tr}}(F) \ge 1 - \delta_{\mathrm{cal}}$.

\paragraph{Step 2 (First Bound).}
Fix $i \in I_{\mathrm{test}}$. Since $t \mapsto \mathbb{P}_{\mathrm{tr}}\left[\hat s_{\mathrm{train}}(X_i,Y_i) \le t\right]$ is nondecreasing,
on the event $F$ we have $\{\hat s_{\mathrm{train}}(X_i,Y_i) \le q_{1-\alpha-\varepsilon_{\mathrm{cal}},\mathrm{train}}\} \subseteq \{\hat s_{\mathrm{train}}(X_i,Y_i) \le \hat q_{1-\alpha,\mathrm{cal}}\}$.
Therefore,
\begin{align*}
\mathbb{P}_{\mathrm{tr}} \left[\hat s_{\mathrm{train}}(X_i,Y_i) \le \hat q_{1-\alpha,\mathrm{cal}}\right]
&\ge \mathbb{P}_{\mathrm{tr}} \left( \{\hat s_{\mathrm{train}}(X_i,Y_i) \le \hat q_{1-\alpha,\mathrm{cal}}\} \cap F \right) \\
&\ge \mathbb{P}_{\mathrm{tr}} \left( \{\hat s_{\mathrm{train}}(X_i,Y_i) \le q_{1-\alpha-\varepsilon_{\mathrm{cal}},\mathrm{train}}\} \cap F \right) \\
&\ge \mathbb{P}_{\mathrm{tr}} \left[\hat s_{\mathrm{train}}(X_i,Y_i) \le q_{1-\alpha-\varepsilon_{\mathrm{cal}},\mathrm{train}}\right] - \mathbb{P}_{\mathrm{tr}}(F^c).
\end{align*}
Since $\mathbb{P}_{\mathrm{tr}}(F^c) \le \delta_{\mathrm{cal}}$, we conclude that
\[
\mathbb{P}_{\mathrm{tr}} \left[\hat s_{\mathrm{train}}(X_i,Y_i) \le \hat q_{1-\alpha,\mathrm{cal}}\right]
\ \ge\ 
\mathbb{P}_{\mathrm{tr}} \left[\hat s_{\mathrm{train}}(X_i,Y_i) \le q_{1-\alpha-\varepsilon_{\mathrm{cal}},\mathrm{train}}\right] - \delta_{\mathrm{cal}}.
\]
By \hyperref[assump:A2_app]{(A2)} we have
\[
\mathbb{P}_{\mathrm{tr}} \left[\hat s_{\mathrm{train}}(X_i,Y_i) \le q_{1-\alpha-\varepsilon_{\mathrm{cal}},\mathrm{train}}\right]
\ \ge\
\mathbb{P}_{\mathrm{tr}} \left[\hat s_{\mathrm{train}}(X_*,Y_*) \le q_{1-\alpha-\varepsilon_{\mathrm{cal}},\mathrm{train}}\right] 
- \varepsilon_{\mathrm{train}}.
\]
Using the definition of the quantile, we have
\[
\mathbb{P}_{\mathrm{tr}} \left[\hat s_{\mathrm{train}}(X_i,Y_i) \le \hat q_{1-\alpha,\mathrm{cal}}\right]
\ \ge\
\mathbb{P}_{\mathrm{tr}} \left[\hat s_{\mathrm{train}}(X_*,Y_*) \le q_{1-\alpha-\varepsilon_{\mathrm{cal}},\mathrm{train}}\right] - \varepsilon_{\mathrm{train}} - \delta_{\mathrm{cal}}
\ \ge\
1 - \alpha - \varepsilon_{\mathrm{cal}} - \varepsilon_{\mathrm{train}} - \delta_{\mathrm{cal}}.
\]
This proves the first bound.

\paragraph{Step 3 (Second Bound).}
Assume that $\hat{s}_{\mathrm{train}}(X_*,Y_*)$ has a continuous distribution almost surely, conditional on the training data.
Define $G := \{\, \hat q_{1-\alpha,\mathrm{cal}} \le q_{1-\alpha+\varepsilon_{\mathrm{cal}},\mathrm{train}} \,\}$.
Proceeding exactly as in Step~1, but using the lower-tail direction of \hyperref[assump:A1_app]{(A1)}, we obtain $\mathbb{P}_{\mathrm{tr}}(G) \ge 1-\delta_{\mathrm{cal}}$.
Now let $A := \{\hat s_{\mathrm{train}}(X_i,Y_i) \le \hat q_{1-\alpha,\mathrm{cal}}\}$.
\begin{align*}
\mathbb{P}_{\mathrm{tr}}(A)
&= \mathbb{P}_{\mathrm{tr}}\big(A \cap G\big) + \mathbb{P}_{\mathrm{tr}}\big(A \cap G^c\big)\\
&\le \mathbb{P}_{\mathrm{tr}}\big(\{\hat s_{\mathrm{train}}(X_i,Y_i) \le \hat q_{1-\alpha,\mathrm{cal}}\} \cap G\big) + \mathbb{P}_{\mathrm{tr}}(G^c)\\
&\le \mathbb{P}_{\mathrm{tr}}\big[\hat s_{\mathrm{train}}(X_i,Y_i) \le q_{1-\alpha+\varepsilon_{\mathrm{cal}},\mathrm{train}}\big] + \mathbb{P}_{\mathrm{tr}}(G^c)\\
&\le \mathbb{P}_{\mathrm{tr}}\big[\hat s_{\mathrm{train}}(X_*,Y_*) \le q_{1-\alpha+\varepsilon_{\mathrm{cal}},\mathrm{train}}\big] + \varepsilon_{\mathrm{train}} + \mathbb{P}_{\mathrm{tr}}(G^c) \quad (\text{by (A2)}) \\
&\le \mathbb{P}_{\mathrm{tr}}\big[\hat s_{\mathrm{train}}(X_*,Y_*) \le q_{1-\alpha+\varepsilon_{\mathrm{cal}},\mathrm{train}}\big] + \varepsilon_{\mathrm{train}} + \delta_{\mathrm{cal}} \\
&= 1-\alpha+\varepsilon_{\mathrm{cal}} + \varepsilon_{\mathrm{train}} + \delta_{\mathrm{cal}}. \quad (\text{by continuity})
\end{align*}
Combining this with the first bound yields the second bound.
\end{proof}

\begin{theorem}[Empirical coverage under non-exchangeability]
\label{theo 2_app}
Let $\alpha \in (0,1)$, \(\delta_{\mathrm{cal}} > 0\) and \(\delta_{\mathrm{test}} > 0\), if \hyperref[assump:A1_app]{(A1)} and \hyperref[assump:A3_app]{(A3)} hold, then: 
\begin{equation}
    \mathbb{P}_{\mathrm{tr}}\left[ \frac{1}{n_{\mathrm{test}}} \sum_{i \in I_{\mathrm{test}}} \mathbf{1}\{Y_i \in \mathcal{C}_{1-\alpha}(X_i)\} \geq 1-\alpha - \eta\right]
    \ \ge\ 1 - \delta_{\mathrm{cal}} - \delta_{\mathrm{test}}.
\end{equation}
where \( \eta = \varepsilon_{\mathrm{cal}} + \varepsilon_{\mathrm{test}}. \)
Additionally, if $\hat{s}_{\mathrm{train}}(X_*, Y_*)$ almost surely has a continuous distribution conditionally on the training data, then the bound can be tightened to:
\begin{equation}
    \mathbb{P}_{\mathrm{tr}}\left[\left| \frac{1}{n_{\mathrm{test}}} \sum_{i \in I_{\mathrm{test}}} \mathbf{1}\{Y_i \in \mathcal{C}_{1-\alpha}(X_i)\} - (1-\alpha) \right|\leq \eta\right]
    \ \ge\ 1- 2\delta_{\mathrm{cal}}-2\delta_{\mathrm{test}}.
\end{equation}
\end{theorem}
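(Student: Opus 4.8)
The plan is to express the empirical coverage directly as an empirical test CDF of the scores evaluated at the data-dependent calibration quantile, and then sandwich it using the population conditional CDF $P_{q_{\mathrm{train}}}$ via (A3). Since $Y_i \in \mathcal{C}_{1-\alpha}(X_i)$ if and only if $\hat{s}_{\mathrm{train}}(X_i,Y_i) \le \hat{q}_{1-\alpha,\mathrm{cal}}$, the empirical coverage equals
\[
\hat{C} := \frac{1}{n_{\mathrm{test}}} \sum_{i \in I_{\mathrm{test}}} \mathbf{1}\{\hat{s}_{\mathrm{train}}(X_i,Y_i) \le \hat{q}_{1-\alpha,\mathrm{cal}}\}.
\]
The essential difference from Theorem \ref{theo 1_app} is that the averaging over $I_{\mathrm{test}}$ is now handled by the test-concentration hypothesis (A3) rather than by the single-point decoupling hypothesis (A2).

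For the lower bound, I would reuse the event $F = \{\hat{q}_{1-\alpha,\mathrm{cal}} \ge q_{1-\alpha-\varepsilon_{\mathrm{cal}},\mathrm{train}}\}$ already shown in the proof of Theorem \ref{theo 1_app} to satisfy $\mathbb{P}_{\mathrm{tr}}(F) \ge 1-\delta_{\mathrm{cal}}$. Since the indicator is nondecreasing in its threshold, on $F$ we have $\hat{C} \ge \tfrac{1}{n_{\mathrm{test}}}\sum_{i\in I_{\mathrm{test}}}\mathbf{1}\{\hat{s}_{\mathrm{train}}(X_i,Y_i)\le q_{1-\alpha-\varepsilon_{\mathrm{cal}},\mathrm{train}}\}$. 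Applying (A3) at the threshold $q_{1-\alpha-\varepsilon_{\mathrm{cal}},\mathrm{train}}$ gives an event $H$ with $\mathbb{P}_{\mathrm{tr}}(H)\ge 1-\delta_{\mathrm{test}}$ on which this sum is at least $P_{q_{1-\alpha-\varepsilon_{\mathrm{cal}},\mathrm{train}}} - \varepsilon_{\mathrm{test}}$. Because $P_{q_{p,\mathrm{train}}}\ge p$ holds by the definition of the population quantile (no continuity required), this is at least $(1-\alpha-\varepsilon_{\mathrm{cal}})-\varepsilon_{\mathrm{test}} = 1-\alpha-\eta$. A union bound over $F\cap H$ then yields the first claim with probability at least $1-\delta_{\mathrm{cal}}-\delta_{\mathrm{test}}$.

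For the two-sided bound under continuity, I would add the companion event $G = \{\hat{q}_{1-\alpha,\mathrm{cal}} \le q_{1-\alpha+\varepsilon_{\mathrm{cal}},\mathrm{train}}\}$ (also from Theorem \ref{theo 1_app}, with $\mathbb{P}_{\mathrm{tr}}(G)\ge 1-\delta_{\mathrm{cal}}$) together with a second application of (A3) at the upper threshold $q_{1-\alpha+\varepsilon_{\mathrm{cal}},\mathrm{train}}$, producing an event $H'$ with $\mathbb{P}_{\mathrm{tr}}(H')\ge 1-\delta_{\mathrm{test}}$. On $G\cap H'$, monotonicity and (A3) give $\hat{C} \le P_{q_{1-\alpha+\varepsilon_{\mathrm{cal}},\mathrm{train}}} + \varepsilon_{\mathrm{test}}$. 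Here continuity upgrades the inequality $P_{q_{p,\mathrm{train}}}\ge p$ to the exact identity $P_{q_{p,\mathrm{train}}}=p$, so that $P_{q_{1-\alpha+\varepsilon_{\mathrm{cal}},\mathrm{train}}}=1-\alpha+\varepsilon_{\mathrm{cal}}$ and hence $\hat{C}\le 1-\alpha+\eta$. Intersecting $F\cap G\cap H\cap H'$ and union-bounding — where $F,G$ each cost $\delta_{\mathrm{cal}}$ and $H,H'$ each cost $\delta_{\mathrm{test}}$ — delivers $|\hat{C}-(1-\alpha)|\le\eta$ on an event of probability at least $1-2\delta_{\mathrm{cal}}-2\delta_{\mathrm{test}}$.

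The only genuinely delicate point, and the main obstacle, is the population-quantile identity $P_{q_{p,\mathrm{train}}}=p$: without continuity the conditional CDF may carry atoms, so only $P_{q_{p,\mathrm{train}}}\ge p$ survives, which is precisely why the lower bound holds unconditionally while the upper bound (and thus the two-sided statement) needs continuity. A secondary but necessary bookkeeping point is the failure-probability accounting: the two-sided argument invokes (A1) in both tail directions and (A3) at two distinct data-dependent thresholds, so the union bound accrues $2\delta_{\mathrm{cal}}+2\delta_{\mathrm{test}}$ rather than $\delta_{\mathrm{cal}}+\delta_{\mathrm{test}}$.
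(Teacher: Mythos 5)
Your proof is correct and follows essentially the same route as the paper's: both arguments pass through the population quantile $q_{1-\alpha\pm\varepsilon_{\mathrm{cal}},\mathrm{train}}$, using (A1) to locate $\hat{q}_{1-\alpha,\mathrm{cal}}$ on a $\delta_{\mathrm{cal}}$-event and (A3) to control the test-set average on a $\delta_{\mathrm{test}}$-event, with the same union-bound accounting ($\delta_{\mathrm{cal}}+\delta_{\mathrm{test}}$ one-sided, doubled for the two-sided claim under continuity). The only difference is presentational: the paper states the key event as an ordering of the empirical calibration and test quantiles, whereas you work directly with the empirical test CDF evaluated at the population quantile, which in effect spells out the steps the paper compresses into ``an argument analogous to Theorem~\ref{theo 1_app}.''
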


\begin{proof}
Using an argument analogous to that in the proof of Theorem~\ref{theo 1_app}, we can show that the event
\[
F := \left\{\, \hat{q}_{1-\alpha-\eta,\mathrm{test}} \ \le\ \hat{q}_{1-\alpha,\mathrm{cal}} \,\right\}
\]
satisfies
\[
\mathbb{P}_{\mathrm{tr}}[F] \ \ge\ 1 - \delta_{\mathrm{cal}} - \delta_{\mathrm{test}}.
\]
Therefore,
\begin{align*}
&\mathbb{P}_{\mathrm{tr}}\!\left[
\frac{1}{n_{\mathrm{test}}} \sum_{i\in I_{\mathrm{test}}}
\mathbf{1}\{ \hat{s}_{\mathrm{train}}(X_i,Y_i) \le \hat{q}_{1-\alpha,\mathrm{cal}} \}
\ \ge\ 1 - \alpha - \eta
\right] \\
&\quad\ge\
\mathbb{P}_{\mathrm{tr}}\!\left[ \{
\dots \} \cap F \right] - \mathbb{P}_{\mathrm{tr}}(F^c) \\
&\quad\geq\
\mathbb{P}_{\mathrm{tr}}\!\left[
\frac{1}{n_{\mathrm{test}}} \sum_{i\in I_{\mathrm{test}}}
\mathbf{1}\{ \hat{s}_{\mathrm{train}}(X_i,Y_i) \le \hat{q}_{1-\alpha-\eta,\mathrm{test}} \}
\ \ge\ 1 - \alpha - \eta
\right]
- \delta_{\mathrm{cal}} - \delta_{\mathrm{test}} \\
&\quad\ge\ 1 - \delta_{\mathrm{cal}} - \delta_{\mathrm{test}},
\end{align*}
which establishes the first claim.
For the second claim, define $G := \left\{\, \hat{q}_{1-\alpha,\mathrm{cal}} \ \leq \ \hat{q}_{1-\alpha+\eta,\mathrm{test}} \,\right\}$.
By the same reasoning, $\mathbb{P}_{\mathrm{tr}}(G) \ge 1 - \delta_{\mathrm{cal}} - \delta_{\mathrm{test}}$.
The event $F \cap G$ has probability $\mathbb{P}_{\mathrm{tr}}(F \cap G) \ge 1 - 2\delta_{\mathrm{cal}} - 2\delta_{\mathrm{test}}$.
On this event, $ \hat{q}_{1-\alpha-\eta,\mathrm{test}} \le \hat{q}_{1-\alpha,\mathrm{cal}} \le \hat{q}_{1-\alpha+\eta,\mathrm{test}} $.
The continuity assumption ensures the quantiles correspond to the desired probabilities, and the result follows.
\end{proof}

\begin{theorem}[Conditional coverage under non-exchangeability]
\label{theo 3_app}
Let $\alpha \in (0,1)$ and $\delta_{\mathrm{cal}} > 0$.  
If \hyperref[assump:A4_app]{(A4)} and \hyperref[assump:A5_app]{(A5)} hold, then for every $A \in \mathcal{A} \subset \mathcal{X}$ and any $i \in I_{\mathrm{test}}$,
\begin{equation}
\mathbb{P}_{\mathrm{tr}}\big[ Y_i \in \mathcal{C}_{1-\alpha}(X_i;A) \mid X_i \in A \big] 
\geq 1 - \alpha - \varepsilon_{\mathrm{cal}} - \delta_{\mathrm{cal}} - \varepsilon_{\mathrm{train}}.
\end{equation}
Furthermore, if $\hat{s}_{\mathrm{train}}(X_*,Y_*)$ almost surely has a continuous distribution given the training data, then
\begin{equation}
\big| \mathbb{P}_{\mathrm{tr}}[Y_i \in \mathcal{C}_{1-\alpha}(X_i;A) \mid X_i \in A] - (1-\alpha) \big| 
\leq \varepsilon_{\mathrm{cal}} + \delta_{\mathrm{cal}} + \varepsilon_{\mathrm{train}}.
\end{equation}
\end{theorem}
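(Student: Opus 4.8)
The plan is to transpose the proof of Theorem~\ref{theo 1_app} essentially verbatim, replacing every marginal object by its conditional-on-$\{X_\bullet \in A\}$ analogue and substituting \hyperref[assump:A4_app]{(A4)} for \hyperref[assump:A1_app]{(A1)} and \hyperref[assump:A5_app]{(A5)} for \hyperref[assump:A2_app]{(A2)}. Concretely, I would fix $A \in \mathcal{A}$, write $I_{\mathrm{cal}}(A) = \{i \in I_{\mathrm{cal}} : X_i \in A\}$ for the calibration indices landing in $A$, and let $\hat{q}_{1-\alpha,\mathrm{cal}}(A)$ denote the empirical $(1-\alpha)$-quantile of the scores $\{\hat{s}_{\mathrm{train}}(X_i,Y_i) : i \in I_{\mathrm{cal}}(A)\}$, which is the object defining $\mathcal{C}_{1-\alpha}(\cdot;A)$. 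On the population side I would introduce the conditional quantile $q_{\beta,\mathrm{train}}(A) := \inf\{q : P_{q,\mathrm{train}}(A) \geq \beta\}$, where $P_{q,\mathrm{train}}(A) = \mathbb{P}_{\mathrm{tr}}[\hat{s}_{\mathrm{train}}(X_*,Y_*) \leq q \mid X_* \in A]$ is exactly the conditional CDF appearing in \hyperref[assump:A4_app]{(A4)}.

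Step~1 reproduces the threshold-comparison event. I would set $F_A := \{\hat{q}_{1-\alpha,\mathrm{cal}}(A) \geq q_{1-\alpha-\varepsilon_{\mathrm{cal}},\mathrm{train}}(A)\}$ and show $\mathbb{P}_{\mathrm{tr}}(F_A) \geq 1 - \delta_{\mathrm{cal}}$. Since \hyperref[assump:A4_app]{(A4)} controls the supremum over all of $\mathcal{A}$ with a single failure probability $\delta_{\mathrm{cal}}$, it in particular controls the fixed set $A$; the same sandwich chain as in Theorem~\ref{theo 1_app}---the conditional empirical CDF evaluated at $q_{1-\alpha-\varepsilon_{\mathrm{cal}},\mathrm{train}}(A) - 1/\ell$ falls strictly below $1-\alpha$, hence below the empirical mass at $\hat{q}_{1-\alpha,\mathrm{cal}}(A)$---combined with the $1/\ell \downarrow 0$ continuity-from-above device, then yields $\mathbb{P}_{\mathrm{tr}}(F_A) \geq 1-\delta_{\mathrm{cal}}$.

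Step~2 delivers the one-sided bound. On $F_A$, monotonicity of $q \mapsto \mathbb{P}_{\mathrm{tr}}[\hat{s}_{\mathrm{train}}(X_i,Y_i) \leq q \mid X_i \in A]$ gives the event inclusion $\{\hat{s}_{\mathrm{train}}(X_i,Y_i) \leq q_{1-\alpha-\varepsilon_{\mathrm{cal}},\mathrm{train}}(A)\} \subseteq \{\hat{s}_{\mathrm{train}}(X_i,Y_i) \leq \hat{q}_{1-\alpha,\mathrm{cal}}(A)\}$, so subtracting $\mathbb{P}_{\mathrm{tr}}(F_A^c) \leq \delta_{\mathrm{cal}}$ costs only $\delta_{\mathrm{cal}}$. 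I would then apply \hyperref[assump:A5_app]{(A5)} to exchange the conditional test law for the conditional independent law at cost $\varepsilon_{\mathrm{train}}$, and finally invoke the definition of $q_{1-\alpha-\varepsilon_{\mathrm{cal}},\mathrm{train}}(A)$ to lower-bound the remaining conditional probability by $1-\alpha-\varepsilon_{\mathrm{cal}}$, assembling the claimed $1-\alpha-\varepsilon_{\mathrm{cal}}-\delta_{\mathrm{cal}}-\varepsilon_{\mathrm{train}}$.

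Step~3 handles the continuous case symmetrically: I would define $G_A := \{\hat{q}_{1-\alpha,\mathrm{cal}}(A) \leq q_{1-\alpha+\varepsilon_{\mathrm{cal}},\mathrm{train}}(A)\}$, obtain $\mathbb{P}_{\mathrm{tr}}(G_A) \geq 1-\delta_{\mathrm{cal}}$ from the upper-tail half of \hyperref[assump:A4_app]{(A4)}, and run the upper-bound estimate exactly as in Step~3 of Theorem~\ref{theo 1_app}, with continuity converting $P_{q_{1-\alpha+\varepsilon_{\mathrm{cal}},\mathrm{train}}(A),\mathrm{train}}(A)$ into the value $1-\alpha+\varepsilon_{\mathrm{cal}}$. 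The main obstacle I anticipate is purely bookkeeping around the \emph{random} conditioning set: $n_{\mathrm{cal}}(A) = |I_{\mathrm{cal}}(A)|$ is random and may be small or even zero, so the conditional empirical quantile must be read on $\{n_{\mathrm{cal}}(A) \geq 1\}$, and I would lean on the $\max\{n_{\mathrm{cal}}(A),1\}$ normalization built into \hyperref[assump:A4_app]{(A4)} to absorb the degenerate case. I would also verify that the conditioning event $\{X_i \in A\}$ is applied \emph{inside} $\mathbb{P}_{\mathrm{tr}}$, so that the conditional probabilities manipulated here are precisely those appearing in \hyperref[assump:A5_app]{(A5)}; once this is checked, the remainder is a faithful line-by-line translation of the marginal proof.
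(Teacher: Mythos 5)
Your proposal is correct and follows essentially the same route as the paper's proof: define the threshold-comparison event via the uniform concentration in (A4), run the sandwich argument conditionally on $\{X_i \in A\}$, and apply (A5) in place of (A2), with the continuity case handled symmetrically. The only cosmetic difference is that you work with a per-set event $F_A$ while the paper states the event uniformly over all $A \in \mathcal{A}$, which is immaterial since (A4) already provides the uniform guarantee at a single cost of $\delta_{\mathrm{cal}}$.
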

\begin{proof}
As in Step 1 of the proof of Theorem~\ref{theo 1_app}, but using the uniform assumption \hyperref[assump:A4_app]{(A4)}, with probability at least $1 - \delta_{\mathrm{cal}}$ the event
\[
F_{\mathrm{cal}} := \big\{\, q_{1-\alpha-\varepsilon_{\mathrm{cal}}}(A) \leq \hat{q}_{1-\alpha,\mathrm{cal}}(A),\ \forall A \in \mathcal{A} \,\big\}
\]
occurs.  
Following Step 2, we have
\[
\mathbb{P}_{\mathrm{tr}}[\hat{s}_{\mathrm{train}}(X_i,Y_i) \leq \hat{q}_{1-\alpha,\mathrm{cal}}(A) \mid X_i \in A]
\geq \mathbb{P}_{\mathrm{tr}}[\hat{s}_{\mathrm{train}}(X_i,Y_i) \leq q_{1-\alpha-\varepsilon_{\mathrm{cal}}}(A) \mid X_i \in A] - \delta_{\mathrm{cal}}.
\]
Applying \hyperref[assump:A5_app]{(A5)} yields:
\begin{align*}
&\mathbb{P}_{\mathrm{tr}}\big[\hat{s}_{\mathrm{train}}(X_i,Y_i) \leq \hat{q}_{1-\alpha,\mathrm{cal}}(A) \,\big|\, X_i \in A\big] \\
&\quad\geq\ \mathbb{P}_{\mathrm{tr}}\big[\hat{s}_{\mathrm{train}}(X_*,Y_*) \leq q_{1-\alpha-\varepsilon_{\mathrm{cal}}}(A) \,\big|\, X_* \in A\big] 
    - \varepsilon_{\mathrm{train}} - \delta_{\mathrm{cal}} \\
&\quad\geq\ 1 - \alpha - \varepsilon_{\mathrm{cal}} - \varepsilon_{\mathrm{train}} - \delta_{\mathrm{cal}}.
\end{align*}
The bound under the continuity assumption follows by repeating the same argument as in Theorem~\ref{theo 1_app}.
\end{proof}

\subsection{Guarantees for $\beta$-Mixing Processes}
\label{sec:beta-mixing_app}
We now show how to derive explicit slack terms $(\delta, \varepsilon)$ for stationary \emph{$\beta$-mixing} (absolutely regular) processes.

\begin{definition}[$\beta$-mixing]
Let $(Z_t)_{t\ge1}$ be a sequence. The $\beta$-mixing coefficient at lag $a\in\mathbb{N}$ is
\[
\beta(a)
:= \big\| \mathbb{P}_{-\infty:0,\,a:\infty}
         - \mathbb{P}_{-\infty:0}\otimes \mathbb{P}_{a:\infty} \big\|_{\mathrm{TV}},
\]
where $\|\cdot\|_{\mathrm{TV}}$ is the total variation distance between the joint law of the past and future, and the product of their marginals. The process is $\beta$-mixing if $\beta(a)\to 0$ as $a\to\infty$.
\end{definition}

This "forgetting" property allows us to use a \emph{blocking technique}.

\begin{proposition}[Blocking Technique]
\label{prop:blocking-technique}
Let $Z_{t=1}^{T}$ be a sample of a stationary $\beta$-mixing process, split into $2m$ interleaved blocks (odd blocks of size $b$, even blocks of size $a$). Let $B_{\mathrm{odd}} = (B_1, B_3, \dots, B_{2m-1})$ be the set of odd blocks, and $B_{\mathrm{odd}}'$ be an independent version.
If $h : \mathbb{R}^{mb} \to \mathbb{R}$ is a Borel-measurable function with $|h| \le M$, then
\[
\big| \mathbb{E}[h(B_{\mathrm{odd}})] - \mathbb{E}[h(B_{\mathrm{odd}}')] \big| \ \le\ 2M (m-1) \beta(a),
\]
where $\beta(a)$ is the $\beta$-mixing coefficient.
\end{proposition}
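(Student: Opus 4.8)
The plan is to recognize the two expectations as integrals of the \emph{same} bounded test function $h$ against two probability measures — the true joint law of the odd blocks and the product of their marginals — and then to control the difference by the total variation distance between these laws. Writing $W_k := B_{2k-1}$ for $k = 1, \dots, m$, let $\mu := \mathcal{L}(W_1, \dots, W_m)$ be the law of $B_{\mathrm{odd}}$ and $\nu := \mathcal{L}(W_1) \otimes \cdots \otimes \mathcal{L}(W_m)$ the law of the independentized version $B_{\mathrm{odd}}'$. Since $|h| \le M$, the elementary inequality $\big|\int h \, d\mu - \int h \, d\nu\big| \le 2M \, \|\mu - \nu\|_{\mathrm{TV}}$, which follows from the Jordan decomposition of the signed measure $\mu - \nu$ (both of whose parts have total mass $\|\mu-\nu\|_{\mathrm{TV}}$ since $\mu,\nu$ are probability measures), reduces the whole statement to proving $\|\mu - \nu\|_{\mathrm{TV}} \le (m-1)\beta(a)$.

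For this TV bound I would use a \emph{hybrid (telescoping) argument} that detaches one block at a time. Define the interpolating measures
\[
\mu_j := \mathcal{L}(W_1, \dots, W_j) \otimes \bigotimes_{k=j+1}^m \mathcal{L}(W_k), \qquad j = 1, \dots, m,
\]
so that $\mu_1 = \nu$ (full product) and $\mu_m = \mu$ (full joint). The triangle inequality gives $\|\mu - \nu\|_{\mathrm{TV}} \le \sum_{j=1}^{m-1} \|\mu_{j+1} - \mu_j\|_{\mathrm{TV}}$, leaving exactly $m-1$ increments to control. Since $\mu_{j+1}$ and $\mu_j$ share the common product tail $\bigotimes_{k=j+2}^m \mathcal{L}(W_k)$, and tensoring with a fixed probability measure leaves TV distance unchanged, each increment collapses to
\[
\|\mu_{j+1} - \mu_j\|_{\mathrm{TV}} = \big\| \mathcal{L}(W_1, \dots, W_{j+1}) - \mathcal{L}(W_1, \dots, W_j) \otimes \mathcal{L}(W_{j+1}) \big\|_{\mathrm{TV}}.
\]

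The heart of the matter is bounding this single increment. Here $(W_1, \dots, W_j)$ is measurable with respect to the history up to the end of block $B_{2j-1}$, while $W_{j+1} = B_{2j+1}$ is measurable with respect to the future starting at block $B_{2j+1}$; these are separated by the intervening even block $B_{2j}$ of length exactly $a$. The two facts I would invoke are that (i) the TV distance between a joint law and the product of its marginals is non-increasing under applying measurable maps to each argument (marginalization can only contract total variation), and (ii) by stationarity the $\beta$-coefficient between two blocks separated by a gap of size $a$ is dominated by $\beta(a)$ as defined through the full past/future $\sigma$-algebras. Together these give $\|\mu_{j+1} - \mu_j\|_{\mathrm{TV}} \le \beta(a)$, and summing the $m-1$ increments yields $\|\mu - \nu\|_{\mathrm{TV}} \le (m-1)\beta(a)$, whence the claimed bound $2M(m-1)\beta(a)$.

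I expect the \textbf{main obstacle} to be the careful justification of (i)–(ii): one must verify that the $\beta$-coefficient defined between the \emph{entire} past and future $\sigma$-algebras genuinely upper-bounds the TV distance between the coarser joint/product laws of $(W_1, \dots, W_j)$ and $W_{j+1}$. This is precisely the monotonicity of $\beta$ under restriction to sub-$\sigma$-algebras, combined with stationarity to replace the position-dependent gap by the single coefficient $\beta(a)$. An equivalent and perhaps more transparent route is Berbee's coupling lemma: one constructs the independent copies successively so that each replacement disagrees with the original with conditional probability at most $\beta(a)$, and then concludes by a union bound over the $m-1$ couplings together with $\big|\mathbb{E}[h(B_{\mathrm{odd}})] - \mathbb{E}[h(B_{\mathrm{odd}}')]\big| \le 2M \, \mathbb{P}(B_{\mathrm{odd}} \ne B_{\mathrm{odd}}')$, giving the same constant $2M(m-1)\beta(a)$.
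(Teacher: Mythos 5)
The paper states this proposition without proof, importing it as a standard tool from the blocking literature (it is essentially Eberlein's lemma / Lemma 4.1 of Yu, 1994, as used in \cite{oliveira_split_2024}), so there is no in-paper argument to compare against. Your proposal is a correct and complete proof, and it is in fact the standard one: reduce $\big|\mathbb{E}[h(B_{\mathrm{odd}})]-\mathbb{E}[h(B_{\mathrm{odd}}')]\big|$ to $2M\,\|\mu-\nu\|_{\mathrm{TV}}$ via the Jordan decomposition, then telescope through the hybrid measures $\mu_j$ that make the first $j$ odd blocks jointly distributed and the rest independent, using that tensoring with a common probability measure preserves TV and that each of the $m-1$ increments is a joint-versus-product TV distance between a function of the past and a function of the future separated by an even block of length $a$, hence at most $\beta(a)$ by the data-processing (sub-$\sigma$-algebra monotonicity) property of $\beta$ together with stationarity. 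The two points you flag as the main obstacles are exactly the right ones and are both standard; the Berbee-coupling alternative you sketch is also valid and gives the same constant. Two cosmetic remarks: the factor $2M$ (rather than $M$) is tied to the convention $\|\mu-\nu\|_{\mathrm{TV}}=\sup_A|\mu(A)-\nu(A)|$ implicit in the paper's definition of $\beta(a)$, and your argument is consistent with that choice; and depending on whether one counts the separating block as a gap of $a$ or $a+1$ indices, the increment is bounded by $\beta(a)$ or the smaller $\beta(a+1)$, so the stated bound holds either way by monotonicity of $\beta$.
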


We also use two standard results:

\begin{lemma}
\label{lem:mohri-rostamizadeh}
Let $Z_{1:n}$ be a sample from a stationary $\beta$-mixing distribution and $\mathcal{F}$ be a class of functions from $\mathcal{X}$ to $[0,1]$. Split the sample into $2m$ blocks of size $a$ ($n = 2ma$). Let $B_{\mathrm{odd}}$ be the odd blocks and $B_{\mathrm{odd}}'$ their independent version. Then,
\[
\mathbb{P} \left( \sup_{f \in \mathcal{F}} \left\lvert\mathbb{E}[f(Z_1)] - \frac{1}{n} \sum_{i=1}^n f(Z_i)\right\rvert > \varepsilon  \right)
\ \le\ 
2\,\mathbb{P}' \left( \sup_{f \in \mathcal{F}} \left\lvert\mathbb{E}[f(Z_1)] - \frac{1}{ma} \sum_{Z_j \in B_{\mathrm{odd}}'} f(Z_j)\right\rvert > \varepsilon \right)
+ 4(m-1)\,\beta(a).
\]
\end{lemma}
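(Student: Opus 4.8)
The plan is to reduce the dependent full-sample concentration problem to the independent single-parity problem in two moves: first a deterministic decomposition of the empirical average into its odd- and even-block parts, and then an application of the blocking technique of Proposition~\ref{prop:blocking-technique} to decouple the odd blocks. Writing the $2m$ consecutive size-$a$ blocks as $B_1,\dots,B_{2m}$, I would set
\[
\Phi_{\mathrm{odd}}(f) := \mathbb{E}[f(Z_1)] - \frac{1}{ma}\sum_{Z_j\in B_{\mathrm{odd}}} f(Z_j),
\qquad
\Phi_{\mathrm{even}}(f) := \mathbb{E}[f(Z_1)] - \frac{1}{ma}\sum_{Z_j\in B_{\mathrm{even}}} f(Z_j).
\]
Since $n=2ma$ and each parity contains $ma$ points, $\frac{1}{n}\sum_{i=1}^n f(Z_i)=\tfrac12\big(\text{odd average}+\text{even average}\big)$, so the deviation $\mathbb{E}[f(Z_1)]-\frac1n\sum_i f(Z_i)$ equals $\tfrac12(\Phi_{\mathrm{odd}}(f)+\Phi_{\mathrm{even}}(f))$. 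Taking absolute values, the triangle inequality, and then the supremum over $f$ give
\[
\sup_{f}\Big|\mathbb{E}[f(Z_1)] - \tfrac1n\textstyle\sum_i f(Z_i)\Big|
\ \le\ \tfrac12\Big(\sup_f|\Phi_{\mathrm{odd}}(f)| + \sup_f|\Phi_{\mathrm{even}}(f)|\Big),
\]
so the event that the left-hand side exceeds $\varepsilon$ is contained in $\{\sup_f|\Phi_{\mathrm{odd}}|>\varepsilon\}\cup\{\sup_f|\Phi_{\mathrm{even}}|>\varepsilon\}$, and a union bound applies.

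Next I would exploit stationarity. Shifting the time index by one block of length $a$ maps the odd-block configuration exactly onto the even-block configuration, so $\sup_f|\Phi_{\mathrm{odd}}|$ and $\sup_f|\Phi_{\mathrm{even}}|$ share the same law. The union bound therefore collapses to
\[
\mathbb{P}\Big(\sup_f|\tfrac1n\textstyle\sum_i f(Z_i) - \mathbb{E}[f(Z_1)]|>\varepsilon\Big)
\ \le\ 2\,\mathbb{P}\big(\sup_f|\Phi_{\mathrm{odd}}|>\varepsilon\big).
\]

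Finally I would decouple the dependent odd blocks. The crucial observation is that the indicator $h(B_{\mathrm{odd}}) := \mathbf{1}\{\sup_f|\Phi_{\mathrm{odd}}|>\varepsilon\}$ is a Borel-measurable function of the odd blocks alone, bounded by $M=1$. Applying Proposition~\ref{prop:blocking-technique} with $b=a$, so that consecutive odd blocks are separated by an even block of length exactly $a$ matching the lag in $\beta(a)$, yields
\[
\mathbb{P}\big(\sup_f|\Phi_{\mathrm{odd}}|>\varepsilon\big)
= \mathbb{E}[h(B_{\mathrm{odd}})]
\ \le\ \mathbb{E}[h(B_{\mathrm{odd}}')] + 2(m-1)\beta(a)
= \mathbb{P}'\big(\sup_f|\Phi_{\mathrm{odd}}'|>\varepsilon\big) + 2(m-1)\beta(a),
\]
where $\mathbb{P}'$ is the law under which the odd blocks are replaced by their independent copy. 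Chaining this with the factor-of-two bound from the previous step produces exactly $2\,\mathbb{P}'(\cdots)+4(m-1)\beta(a)$, as claimed.

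The steps are individually routine; the main thing to get right is the bookkeeping. Specifically, one must verify that the supremum event genuinely depends only on $B_{\mathrm{odd}}$ so that Proposition~\ref{prop:blocking-technique} applies with $M=1$, confirm that the parity-swap symmetry from stationarity is exact rather than merely approximate, and track the two factors of two so that the decoupling penalty accumulates to $4(m-1)\beta(a)$ and not $2(m-1)\beta(a)$. I do not expect any measurability or concentration difficulty beyond these organizational points.
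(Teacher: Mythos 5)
Your proof is correct, and the constants come out exactly right: the odd/even decomposition with sub-additivity of the supremum gives the factor of $2$ via stationarity (the shift by one block length does map the odd-block configuration onto the even-block one exactly), and applying Proposition~\ref{prop:blocking-technique} with $b=a$ and $M=1$ to the indicator of the deviation event contributes $2(m-1)\beta(a)$ per parity, which doubles to the stated $4(m-1)\beta(a)$. Note that the paper itself does not prove this lemma --- it imports it as a standard result (the Mohri--Rostamizadeh/Yu blocking argument) --- so there is no in-paper proof to compare against; your argument is precisely the standard one underlying the cited result and is consistent with the paper's own blocking proposition. The only caveat worth recording is the usual measure-theoretic one: for an uncountable class $\mathcal{F}$ the supremum $\sup_{f\in\mathcal{F}}|\Phi_{\mathrm{odd}}(f)|$ need not be measurable, so one should either assume $\mathcal{F}$ is countable (or pointwise-separable) or work with outer probability; this does not affect the structure of your argument.
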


\begin{lemma}[Bernstein’s inequality, \cite{vershynin_high-dimensional_2018}]
\label{lem:bernstein}
Let $X_1, \dots, X_m \in [0,1]$ be independent random variables with 
$\mathrm{Var}(X_j) \le \sigma^2$. Then, for any $\delta \in (0,1)$, with probability at least $1 - \delta$,
\[
\left|\mathbb{E}[X_j] - \frac{1}{m} \sum_{j=1}^m X_j\right|
\ \le\ \sigma
\sqrt{\frac{2\log(1/\delta)}{m}} 
+ \frac{\log(1/\delta)}{3m}.
\]
\end{lemma}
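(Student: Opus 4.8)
The plan is to prove this by the classical Cramér--Chernoff (exponential-moment) method, exhibiting the centered sum as a sub-gamma random variable. Write $Y_j := X_j - \mathbb{E}[X_j]$, so that $\mathbb{E}[Y_j] = 0$, $|Y_j| \le 1$ (because $X_j \in [0,1]$ forces $\mathbb{E}[X_j] \in [0,1]$), and $\mathrm{Var}(Y_j) \le \sigma^2$. Put $S := \sum_{j=1}^m Y_j = \sum_{j=1}^m X_j - m\,\mathbb{E}[X_j]$. It suffices to prove a one-sided deviation bound on $S$; the two-sided statement with the absolute value then follows by applying the identical argument to $-Y_j$ and taking a union bound over the two tails.

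The crux is the moment-generating-function estimate: for every $j$ and every $\lambda \in (0,3)$,
\[
\mathbb{E}\!\left[e^{\lambda Y_j}\right] \ \le\ \exp\!\left(\frac{\lambda^2 \sigma^2/2}{\,1 - \lambda/3\,}\right).
\]
I would obtain this by expanding $e^{\lambda Y_j}$ as a power series, discarding the linear term by centering, and bounding the higher moments via $\mathbb{E}[|Y_j|^k] \le \mathbb{E}[Y_j^2] \le \sigma^2$ for $k \ge 2$ (using $|Y_j|\le1$). The remaining series $\sigma^2\sum_{k\ge2}\lambda^k/k!$ is controlled with the elementary factorial bound $k! \ge 2\cdot 3^{\,k-2}$, which collapses it to a geometric series of ratio $\lambda/3$, and one closes with $1+x \le e^x$.

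With the MGF bound in hand, I would apply Markov's inequality to $e^{\lambda S}$ together with independence, $\mathbb{P}(S \ge u) \le e^{-\lambda u}\prod_{j}\mathbb{E}[e^{\lambda Y_j}]$, which gives $\log\mathbb{E}[e^{\lambda S}] \le \tfrac{m\sigma^2\lambda^2/2}{1-\lambda/3}$. Hence $S$ is sub-gamma with variance factor $v=m\sigma^2$ and scale $1/3$, and the standard deviation inequality for such variables, $\mathbb{P}\big(S \ge \sqrt{2vt}+t/3\big)\le e^{-t}$ (itself the result of optimizing the Chernoff exponent over $\lambda$), taken at $t=\log(1/\delta)$, yields $S \le \sigma\sqrt{2m\log(1/\delta)}+\tfrac13\log(1/\delta)$ with probability at least $1-\delta$. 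Dividing by $m$ and invoking symmetry for the lower tail recovers the stated inequality. The main obstacle is the MGF estimate of the second paragraph: the sharp additive term $\log(1/\delta)/(3m)$ hinges on the constant $1/3$ coming out of the factorial bound together with the clean sub-gamma inversion, whereas a crude inversion of the Bernstein tail $\exp\!\big(-\tfrac{u^2/2}{v+u/3}\big)$ loses a factor in precisely this term; the Chernoff optimization and the remaining algebra are otherwise routine.
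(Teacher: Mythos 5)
The paper does not prove this lemma at all: it is imported verbatim as a standard result from \cite{vershynin_high-dimensional_2018}, so there is no in-paper argument to compare against. Your Cram\'er--Chernoff proof is the standard one for exactly this statement and is essentially correct: the MGF bound $\mathbb{E}[e^{\lambda Y_j}]\le\exp\bigl(\tfrac{\lambda^2\sigma^2/2}{1-\lambda/3}\bigr)$ via $\mathbb{E}[|Y_j|^k]\le\mathbb{E}[Y_j^2]$ and $k!\ge 2\cdot 3^{k-2}$ is right, and the sub-gamma inversion $\mathbb{P}(S\ge\sqrt{2vt}+t/3)\le e^{-t}$ with $v=m\sigma^2$, $t=\log(1/\delta)$ reproduces the displayed constants exactly for one tail. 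The only blemish is the last step: a union bound over the two tails, each controlled at level $\delta$, gives the two-sided bound with probability $1-2\delta$, not $1-\delta$; to literally obtain the stated claim you must run each tail at level $\delta/2$, which replaces $\log(1/\delta)$ by $\log(2/\delta)$. This constant-factor slack is inherited from the lemma as stated in the paper (a common informal rendering of Bernstein's inequality) rather than being a defect of your argument, but it is worth flagging since downstream the paper substitutes this bound into Lemma~\ref{lem:lemma12} with explicit constants.
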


These tools lead to the key technical lemma:

\begin{lemma}
\label{lem:lemma12}
Let $Z_{1:n}$ be a sample drawn from a stationary $\beta$-mixing distribution with $Z_1 \in [0,1]$ and $\mathrm{Var}[Z_1] = v < \infty$. Then, for $n = 2ma + s$ and $\delta > 4(m-1)\beta(a)$, with probability at least $1 - \delta$ it holds that
\[
\left\lvert\mathbb{E}[Z_1] - \frac{1}{n} \sum_{i=1}^n Z_i \right\rvert \ \le\ \varepsilon 
\]
where
\[
\varepsilon = \tilde{\sigma}(a) \sqrt{\frac{4}{n}\log \left({\frac{4}{\delta-4(m-1)\beta(a)}} \right)}
+ \frac{1}{3m}\log\left({\frac{4}{\delta-4(m-1)\beta(a)}} \right)
+ \frac{s}{n}
\]
and $\tilde{\sigma}(a) = \sqrt{v + \frac{2}{a} \sum_{k=1}^{a-1} (a - k) \beta(k)}$.
\end{lemma}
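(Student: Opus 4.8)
The plan is to convert the dependent empirical mean into a sum of \emph{independent} block averages, apply Bernstein's inequality (Lemma~\ref{lem:bernstein}) to those, and pay the dependence price through the blocking results (Proposition~\ref{prop:blocking-technique} and Lemma~\ref{lem:mohri-rostamizadeh}). Writing $n = 2ma + s$, the first $2ma$ observations are partitioned into $2m$ consecutive blocks of length $a$, the $m$ odd blocks supplying the independent surrogate $B_{\mathrm{odd}}'$, while the $s$ trailing points are handled separately. The three summands in the target $\varepsilon$ will come, respectively, from the Bernstein variance term, the Bernstein lower-order term, and the leftover points. First I would isolate the leftover: since $Z_i \in [0,1]$,
\[
\mathbb{E}[Z_1] - \frac1n\sum_{i=1}^n Z_i = \frac{2ma}{n}\Bigl(\mathbb{E}[Z_1] - \frac{1}{2ma}\sum_{i=1}^{2ma} Z_i\Bigr) + \Bigl(\tfrac{s}{n}\mathbb{E}[Z_1] - \tfrac1n\sum_{i=2ma+1}^{n} Z_i\Bigr),
\]
where the last bracket is at most $s/n$ in absolute value (both of its terms lie in $[0,s/n]$) and the prefactor $2ma/n \le 1$. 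This yields the additive $s/n$ and reduces everything to concentrating the average over the first $2ma$ points.

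Next I would invoke the blocking machinery. Taking the singleton class $\{f(z)=z\}$, Lemma~\ref{lem:mohri-rostamizadeh} (with $n \to 2ma$) bounds the deviation probability of the $2ma$-point average by twice the corresponding probability for the independent odd-block version plus $4(m-1)\beta(a)$. Writing $U_j = \frac1a\sum_{t \in \text{block } j} Z_t \in [0,1]$ for the $m$ independent block averages, the surrogate mean is $\frac1m\sum_{j=1}^m U_j$, the $U_j$ are i.i.d.\ with mean $\mathbb{E}[Z_1]$, and a stationary variance expansion gives
\[
\mathrm{Var}(U_j) = \frac1a\Bigl(v + \frac2a\sum_{k=1}^{a-1}(a-k)\,\gamma(k)\Bigr), \qquad \gamma(k) := \mathrm{Cov}(Z_0, Z_k).
\]
The decisive step is the covariance bound $|\gamma(k)| \le \beta(k)$: using the coupling characterisation of total variation, construct $\tilde Z_k \stackrel{d}{=} Z_k$ independent of $Z_0$ with $\mathbb{P}(Z_k \ne \tilde Z_k) \le \beta(k)$, so that $|\gamma(k)| = |\mathbb{E}[Z_0(Z_k - \tilde Z_k)]| \le \mathbb{P}(Z_k \ne \tilde Z_k) \le \beta(k)$ since $0 \le Z_0 \le 1$ and $|Z_k - \tilde Z_k| \le 1$. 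Hence $\mathrm{Var}(U_j) \le \tilde\sigma(a)^2/a$.

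Finally I would apply Bernstein to the $m$ i.i.d.\ variables $U_j$ with variance proxy $\sigma^2 = \tilde\sigma(a)^2/a$, obtaining with probability at least $1-\delta'$
\[
\Bigl|\mathbb{E}[Z_1] - \frac1m\sum_{j=1}^m U_j\Bigr| \le \tilde\sigma(a)\sqrt{\frac{2\log(1/\delta')}{am}} + \frac{\log(1/\delta')}{3m}.
\]
Since $am = (n-s)/2$, the first term equals $\tilde\sigma(a)\sqrt{(4/(n-s))\log(1/\delta')}$, which the statement records as $\tilde\sigma(a)\sqrt{(4/n)\log(\cdot)}$ (replacing $n-s$ by $n$ is a mild simplification of the same order as the $s/n$ correction). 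To calibrate the confidence level I set $\delta'$ so that $2\delta' + 4(m-1)\beta(a) = \delta$, the hypothesis $\delta > 4(m-1)\beta(a)$ guaranteeing $\delta' > 0$; tracking the factor $2$ from Lemma~\ref{lem:mohri-rostamizadeh} together with the two-sided Bernstein tail produces the $\log\!\big(4/(\delta - 4(m-1)\beta(a))\big)$ that appears in both Bernstein terms. Summing the three contributions delivers the stated $\varepsilon$.

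I expect the main obstacle to be the variance control, namely establishing $|\gamma(k)| \le \beta(k)$ and thereby $\mathrm{Var}(U_j) \le \tilde\sigma(a)^2/a$: this is the one place where the $\beta$-mixing assumption must be turned into a quantitative estimate, genuinely using the coupling/total-variation meaning of $\beta(a)$ rather than routine manipulations. A secondary, fiddlier difficulty is the bookkeeping of normalisations and constants (the $n$ versus $2ma = n-s$ in the denominators, and the factors of $2$ linking Lemma~\ref{lem:mohri-rostamizadeh} and the two-sided Bernstein tail) required to land exactly on the stated form of $\varepsilon$.
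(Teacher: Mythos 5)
Your proposal follows essentially the same route as the paper's proof: the same Step-1 decomposition isolating the $s/n$ remainder, Lemma~\ref{lem:mohri-rostamizadeh} applied to the singleton class $\{f(z)=z\}$ to pass to independent odd-block averages at the cost of $4(m-1)\beta(a)$, Bernstein's inequality on the $m$ block means, and the bound $\mathrm{Var}(U_j)\le\tilde\sigma(a)^2/a$ via $|\mathrm{Cov}(Z_0,Z_k)|\le\beta(k)$. The only differences are cosmetic: you supply a coupling (Berbee-style) argument for the covariance bound where the paper simply cites \cite{doukhan_mixing_1994}, and both you and the paper handle the $2ma$ versus $n$ normalisation and the constants inside the logarithm with the same level of informality (keeping the $\tfrac{2ma}{n}$ prefactor through the Bernstein terms would in fact land you exactly on the stated $\sqrt{4/n}$ form).
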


\begin{proof}
\noindent\textbf{Step 1 (Reduce the full mean).}
$\bigl|\mathbb{E}[Z_1]-\frac{1}{n}\sum_{i=1}^n Z_i\bigr| \le \frac{2ma}{n}\bigl|\mathbb{E}[Z_1]-\frac{1}{2ma}\sum_{i=1}^{2ma} Z_i\bigr| + \frac{s}{n}$.
\medskip
\noindent\textbf{Step 2 (Apply Lemma~\ref{lem:mohri-rostamizadeh} with $f(x)=x$).}
Let $\bar Z$ be the mean of the first $2ma$ points and $\bar Z'$ be the mean of the odd blocks in an independent version. For $\delta' = \delta - 4(m-1)\beta(a)$,
\[
\mathbb{P}\!\left(\left|\mathbb{E}[Z_1] - \bar Z \right|> \varepsilon\right)
\ \le\ 
2\,\mathbb{P}'\!\left(\left|\mathbb{E}[Z_1] - \bar Z'\right| > \varepsilon\right) + \delta - \delta'.
\]
\medskip
\noindent\textbf{Step 3 (Bernstein on the independent block averages).}
Apply Lemma~\ref{lem:bernstein} to $\bar Z'$ (which is an average of $m$ independent block averages $X_j$) with confidence level $\delta'/2$.
Let $\sigma_a^2 = \mathrm{Var}(X_j)$. With probability at least $1-\delta'/2$ under $\mathbb{P}'$,
\[
\left|\mathbb{E}[Z_1] - \bar Z'\right|
\ \le\ \sigma_a
\sqrt{\frac{2\,\log\!\big(4/\delta'\big)}{m}} + \frac{\log\!\big(4/\delta'\big)}{3m}.
\]
\medskip
\noindent\textbf{Step 4 (Identify $\tilde{\sigma}(a)$).}
We bound $\sigma_a^2 = \mathrm{Var}(\frac{1}{a}\sum_{i=1}^a Z_i)$ using $|\mathrm{Cov}(Z_1,Z_{1+k})| \le \beta(k)$ \cite{doukhan_mixing_1994}:
\[
\sigma_a^2 \le \frac{1}{a^2} \left[ a v + 2 \sum_{k=1}^{a-1} (a-k) \beta(k) \right] = \frac{1}{a} \tilde{\sigma}(a)^2.
\]
Substituting $\sigma_a = \tilde{\sigma}(a)/\sqrt{a}$ and $2ma \approx n$ gives the result.
\end{proof}

This lemma is the engine for verifying assumptions (A1)-(A3).

\begin{proposition}
\label{prop: 13_app}
(Verifying A1) Assume that $(X_i, Y_i)_{i=1}^n$ is a stationary $\beta$-mixing process. Then \hyperref[assump:A1_app]{(A1)} is satisfied with
\begin{equation}\label{epscal_app}
\begin{aligned}
\varepsilon_{\mathrm{cal}}
  &= \inf_{(a,m,r)\in\mathcal{F}_{\mathrm{cal}}} \Bigg\{ \;
      \tilde{\sigma}(a)\,
      \sqrt{\frac{4}{\,n_{\mathrm{cal}}-r+1\,}
      \log\!\left(
        \frac{4}{\delta_{\mathrm{cal}}
          -4(m-1)\beta(a)-\beta(r)}
      \right)} \\[6pt]
  &\qquad\qquad
      + \frac{1}{3m}\,
        \log\!\left( \dots \right)
      + \frac{r-1}{n_{\mathrm{cal}}}
    \;\Bigg\}.
\end{aligned}
\end{equation}
where $\mathcal{F}_{\mathrm{cal}} = \bigl\{ (a,m,r)\in\mathbb{N}_{>0}^3 : 2ma = n_{\mathrm{cal}}-r+1,\; \delta_{\mathrm{cal}} > 4(m-1)\beta(a)+\beta(r) \bigr\}$
and $\tilde{\sigma}(a) = \sqrt{\frac{1}{4} + \frac{2}{a}\sum_{k=1}^{a-1}(a-k)\beta(k)}$ (using $v \le 1/4$ for indicator variables).
\end{proposition}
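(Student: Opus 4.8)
The plan is to recognise the calibration empirical CDF evaluated at a threshold as an average of $[0,1]$-valued indicators and to feed this directly into the engine of Lemma~\ref{lem:lemma12}, with the only genuinely new ingredient being the decoupling of the calibration block from the training $\sigma$-field $\mathcal{F}_{\mathrm{train}}$. Fix a deterministic threshold $q$ and set $Z_i := \mathbf{1}\{\hat{s}_{\mathrm{train}}(X_i,Y_i) \le q\}$ for $i \in I_{\mathrm{cal}}$. These are Bernoulli, so $\mathrm{Var}_{\mathrm{tr}}(Z_i) \le 1/4$, which is exactly the source of the $v\le 1/4$ substitution that produces the stated $\tilde\sigma(a)=\sqrt{\tfrac14 + \tfrac{2}{a}\sum_{k=1}^{a-1}(a-k)\beta(k)}$. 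Their common conditional mean is the true conditional CDF at $q$, which for the random threshold becomes $P_{q_{\mathrm{train}}}$ of \eqref{eq:p_qtrain_app}; the quantity (A1) asks us to control is then precisely $\bigl| \frac{1}{n_{\mathrm{cal}}}\sum_{i\in I_{\mathrm{cal}}} Z_i - P_{q_{\mathrm{train}}} \bigr|$.

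First I would peel off a buffer of $r-1$ calibration points. Writing $n' := n_{\mathrm{cal}}-r+1 = 2ma$ for the retained sub-block, the triangle-inequality reduction that opens Lemma~\ref{lem:lemma12} gives
\[
\Bigl| P_{q_{\mathrm{train}}} - \tfrac{1}{n_{\mathrm{cal}}}\sum_{i\in I_{\mathrm{cal}}} Z_i \Bigr|
\ \le\ \Bigl| P_{q_{\mathrm{train}}} - \tfrac{1}{n'}\sum_{i\in\mathrm{retained}} Z_i \Bigr| + \frac{r-1}{n_{\mathrm{cal}}},
\]
since each dropped indicator contributes at most $1/n_{\mathrm{cal}}$ and $n'/n_{\mathrm{cal}}\le 1$; this is the origin of the additive $(r-1)/n_{\mathrm{cal}}$ term. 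Crucially, these same $r-1$ dropped points act as a spacer separating the retained calibration indices from $I_{\mathrm{train}}$.

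Second, I would invoke the $\beta$-mixing decoupling across this spacer to replace the conditional law of the retained sub-block given $\mathcal{F}_{\mathrm{train}}$ by its unconditional stationary law, at a total-variation cost of $\beta(r)$; this is what lets an unconditional concentration bound be read as the conditional statement $\mathbb{P}_{\mathrm{tr}}(\cdot)$ that \hyperref[assump:A1_app]{(A1)} requires, while charging $\beta(r)$ to the failure budget and simultaneously rendering the $\mathcal{F}_{\mathrm{train}}$-measurable threshold $q_{\mathrm{train}}$ a fixed constant. Applying Lemma~\ref{lem:lemma12} (via the blocking technique of Proposition~\ref{prop:blocking-technique} and Bernstein's inequality, Lemma~\ref{lem:bernstein}) to the $2ma$ retained points with confidence parameter $\delta_{\mathrm{cal}}-\beta(r)$ and remainder $s=0$ then yields, with conditional probability at least $1-\delta_{\mathrm{cal}}$,
\[
\Bigl| P_{q_{\mathrm{train}}} - \tfrac{1}{n'}\sum_{i\in\mathrm{retained}} Z_i \Bigr|
\ \le\ \tilde\sigma(a)\sqrt{\tfrac{4}{n'}\log\!\tfrac{4}{\delta_{\mathrm{cal}}-4(m-1)\beta(a)-\beta(r)}} + \tfrac{1}{3m}\log\!\tfrac{4}{\delta_{\mathrm{cal}}-4(m-1)\beta(a)-\beta(r)},
\]
valid exactly when $\delta_{\mathrm{cal}} > 4(m-1)\beta(a)+\beta(r)$, which is the defining constraint of $\mathcal{F}_{\mathrm{cal}}$. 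Because Lemma~\ref{lem:lemma12}'s bound depends on the threshold only through the uniform variance bound $1/4$, the inequality holds for every $q$, in particular for $q_{\mathrm{train}}$. Summing the two displays and taking the infimum over all admissible $(a,m,r)\in\mathcal{F}_{\mathrm{cal}}$ delivers the stated $\varepsilon_{\mathrm{cal}}$.

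The hard part will be the conditioning bookkeeping in the second step: Lemma~\ref{lem:lemma12} is an unconditional statement, whereas (A1) concerns $\mathbb{P}_{\mathrm{tr}}=\mathbb{P}(\cdot\mid\mathcal{F}_{\mathrm{train}})$, and $q_{\mathrm{train}}$ is itself $\mathcal{F}_{\mathrm{train}}$-measurable. Making the $\beta(r)$-decoupling rigorous — that is, arguing that separating the retained block from $\mathcal{F}_{\mathrm{train}}$ by the buffer costs at most $\beta(r)$ in the conditional failure probability \emph{uniformly} over the random threshold — is the delicate point; everything else is the mechanical substitution $v\le\tfrac14$, $n=n'$, $s=0$ into Lemma~\ref{lem:lemma12}, followed by the optimisation over $(a,m,r)$.
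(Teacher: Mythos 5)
Your proposal is correct and follows essentially the same route as the paper's proof: the same indicator variables with variance bound $v\le 1/4$, the same $r-1$ buffer producing the $(r-1)/n_{\mathrm{cal}}$ term, the same $\beta(r)$ total-variation decoupling from $\mathcal{F}_{\mathrm{train}}$ (which simultaneously fixes $q_{\mathrm{train}}$), followed by Lemma~\ref{lem:lemma12} under the product measure and the infimum over $(a,m,r)\in\mathcal{F}_{\mathrm{cal}}$. The ``delicate point'' you flag about making the conditional decoupling rigorous is likewise left at the same level of detail in the paper's own argument.
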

\begin{proof}
We want to apply Lemma~\ref{lem:lemma12} to $Z_i = \mathbf{1}\{\hat{s}_{\mathrm{train}}(X_i,Y_i)\le q_{\mathrm{train}}\}$ for $i \in I_{\mathrm{cal}}$. These $Z_i$ are not independent of $q_{\mathrm{train}}$ (which depends on $I_{\mathrm{train}}$).
\textbf{Step 1. Decoupling via a gap.}
Define a shifted calibration set $I_{\mathrm{cal},r}=\{n_{\mathrm{train}}+r,\ldots,n_{\mathrm{train}}+n_{\mathrm{cal}}\}$ and the event $E(r,\varepsilon)$ of deviation on this set. Let $\mathbb{P}_*$ be the product measure where $I_{\mathrm{train}}$ and $I_{\mathrm{cal},r}$ are independent. The total variation distance between the true law and $\mathbb{P}_*$ is at most $\beta(r)$.
Thus, $\mathbb{P}_{\mathrm{tr}}[E(1,\varepsilon)] \le \mathbb{P}_{\mathrm{tr}}[E(r,\varepsilon-\frac{r-1}{n_{\mathrm{cal}}})] \le \mathbb{P}_*[E(r,\varepsilon-\frac{r-1}{n_{\mathrm{cal}}})] + \beta(r)$.
\textbf{Step 2. Concentration under independence.}
Working under $\mathbb{P}_*$, $q_{\mathrm{train}}$ is fixed, and the $Z_i$ for $i \in I_{\mathrm{cal},r}$ form a $\beta$-mixing sequence of length $n_{\mathrm{cal}}-r+1$. We apply Lemma~\ref{lem:lemma12} to this sequence.
\textbf{Step 3. Optimizing the bound.}
This gives a probability bound $\delta_{\mathrm{cal}} - 4(m-1)\beta(a) - \beta(r)$ for a deviation of size $\varepsilon'$. We choose $\varepsilon_{\mathrm{cal}}$ to be the infimum of this $\varepsilon'$ over all valid choices of $(a,m,r)$.
\end{proof}

\begin{proposition}
\label{prop: 14_app}
(Verifying A2) If $(X_i,Y_i)_{i=1}^n$ is a stationary $\beta$-mixing process, then \hyperref[assump:A2_app]{(A2)} holds with
\[
\varepsilon_{\mathrm{train}} \;=\; \beta(k-n_{\mathrm{train}}) =: \beta_k, \quad k \in I_\mathrm{test}
\]
\end{proposition}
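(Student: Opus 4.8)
The plan is to rewrite both probabilities appearing in \hyperref[assump:A2_app]{(A2)} as integrals of a single bounded indicator against two different laws, and then bound their difference by a total variation distance that the $\beta$-mixing coefficient controls directly. Write $W := (X_j,Y_j)_{j \in I_{\mathrm{train}}}$ for the training data and $V := (X_k,Y_k)$ for the test pair at index $k \in I_{\mathrm{test}}$. Since both the score $\hat{s}_{\mathrm{train}}$ and the threshold $q_{\mathrm{train}}$ are $\mathcal{F}_{\mathrm{train}}$-measurable, the set $C := \{(x,y) : \hat{s}_{\mathrm{train}}(x,y) \le q_{\mathrm{train}}\}$ is a random, $\mathcal{F}_{\mathrm{train}}$-measurable subset of $\mathcal{X}\times\mathcal{Y}$, and $h(W,V) := \mathbf{1}\{V \in C\}$ is a Borel function bounded by $1$.

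First I would identify the two terms. The first is the conditional law of the test pair evaluated at $C$, namely $\mathbb{P}_{\mathrm{tr}}[\hat{s}_{\mathrm{train}}(X_k,Y_k) \le q_{\mathrm{train}}] = \mathbb{P}[V \in C \mid \mathcal{F}_{\mathrm{train}}]$. The second involves the independent copy $(X_*,Y_*)$, which is independent of $\mathcal{F}_{\mathrm{train}}$ and shares the common marginal $\nu$ of $(X_k,Y_k)$ by stationarity; hence $\mathbb{P}_{\mathrm{tr}}[\hat{s}_{\mathrm{train}}(X_*,Y_*) \le q_{\mathrm{train}}] = \nu(C)$, the marginal applied to the same training-dependent set. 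The quantity to bound is therefore $|\mathbb{P}[V \in C \mid \mathcal{F}_{\mathrm{train}}] - \nu(C)|$.

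Next I would invoke the definition of $\beta$-mixing. Since $C \in \sigma(V)$ for each fixed realization of $W$, this difference is at most $\sup_{B \in \sigma(V)} |\mathbb{P}(B \mid \mathcal{F}_{\mathrm{train}}) - \mathbb{P}(B)|$, whose expectation over the training data equals the $\beta$-mixing coefficient between $\mathcal{F}_{\mathrm{train}} = \sigma((X_j,Y_j)_{j \le n_{\mathrm{train}}})$ and $\sigma(X_k,Y_k)$. By stationarity these two $\sigma$-algebras are separated by a temporal gap of $k - n_{\mathrm{train}}$, so this coefficient is bounded by $\beta(k - n_{\mathrm{train}}) = \beta_k$, yielding the claimed slack $\varepsilon_{\mathrm{train}} = \beta_k$. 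Equivalently, one may bypass conditioning altogether: by Berbee's coupling lemma there is a copy $\tilde V \stackrel{d}{=} V$ that is independent of $W$ with $\mathbb{P}(\tilde V \neq V) \le \beta(k - n_{\mathrm{train}})$, and then $|\mathbb{E}[h(W,V)] - \mathbb{E}[h(W,\tilde V)]| \le \mathbb{P}(\tilde V \neq V) \le \beta_k$.

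The main obstacle is reconciling the pointwise, conditional form of \hyperref[assump:A2_app]{(A2)} with the fact that $\beta$-mixing only controls the \emph{expected} (over the training data) total variation between the conditional and marginal laws, rather than an almost-sure bound. I would therefore make explicit that the decoupling holds in this integrated sense — equivalently, as a statement about unconditional probabilities via the coupling argument — which is precisely the form in which the slack feeds into the (marginal) coverage guarantee of Theorem~\ref{theo 1_app}; upgrading to an almost-sure conditional bound would require the stronger $\phi$-mixing condition.
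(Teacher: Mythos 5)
Your argument is essentially the paper's own proof: both bound the gap by the total variation distance between the joint law of the training block and the test point and the product of their marginals, which the $\beta$-mixing coefficient at lag $k - n_{\mathrm{train}}$ controls, and then identify the product-measure probability with that of the independent copy $(X_*,Y_*)$. Your closing caveat --- that $\beta$-mixing bounds only the \emph{expected} conditional total variation, so (A2) is obtained in the integrated rather than almost-sure conditional sense --- flags a genuine subtlety that the paper's proof silently elides, and is worth retaining.
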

\begin{proof}
Fix $k \in I_\mathrm{test}$ and consider the product measure $\mathbb{P}_* = \mathbb{P}_{1}^{n_{\mathrm{train}}} \otimes \mathbb{P}_k^{k}$, where the $k$-th point is independent of training. By $\beta$-mixing (with gap $k-n_{\mathrm{train}}$),
\begin{align*}
\beta_k &\;\geq\;
 \left\vert\mathbb{P}_{\mathrm{tr}}\left[
   \hat{s}_{\mathrm{train}}(X_k,Y_k) \le q_{\mathrm{train}}\right]
   -\mathbb{P}_*\left[
   \hat{s}_{\mathrm{train}}(X_k,Y_k) \le q_{\mathrm{train}}
\right] \right\vert \\
&= 
\left\vert \mathbb{P}_{\mathrm{tr}}\left[
   \hat{s}_{\mathrm{train}}(X_k,Y_k) \le q_{\mathrm{train}}\right]  - 
   \mathbb{E}_*\left[
   \mathbb{P}_*\left[
     \hat{s}_{\mathrm{train}}(X_k,Y_k) \le q_{\mathrm{train}}
     \,\mid\,\mathcal{F}_{\mathrm{train}}
   \right]\right]\right\vert  % <--- This was the error
 \\
&= 
\left\vert \mathbb{P}_{\mathrm{tr}}\left[
   \hat{s}_{\mathrm{train}}(X_k,Y_k) \le q_{\mathrm{train}}\right]  - 
   \mathbb{P}_{\mathrm{tr}}\left[
     \hat{s}_{\mathrm{train}}(X_*,Y_*) \le q_{\mathrm{train}}
   \right]\right\vert
\end{align*}
This is exactly the bound required by \hyperref[assump:A2_app]{(A2)}.
\end{proof}

\begin{proposition}
\label{prop: 15_app}
(Verifying A3) Assume that $(X_i, Y_i)_{i=1}^n$ is a stationary $\beta$-mixing process. Then \hyperref[assump:A3_app]{(A3)} is satisfied with
\begin{equation} \label{epstest_app}
\begin{aligned}
\varepsilon_{\mathrm{test}}
  = \inf_{(a,m,s)\in \mathcal{F}_{\mathrm{test}}}
      \Biggl\{&
        \tilde{\sigma}(a)\,
        \sqrt{\frac{4}{n_{\mathrm{test}}}
           \log\!\left(
             \frac{4}{\delta_{\mathrm{test}}
                   -4(m-1)\beta(a)-\beta(n_{\mathrm{cal}})}
           \right)} \\[6pt]
       &+ \frac{1}{3m}\,
           \log\!\left( \dots \right)
        + \frac{s}{n_{\mathrm{test}}}
      \Biggr\},
\end{aligned}
\end{equation}
where $\mathcal{F}_{\mathrm{test}} = \Bigl\{(a,m,s)\in\mathbb{N}^2\times\mathbb{N}_{\ge0}:\, s+2ma=n_{\mathrm{test}},\;\; \delta_{\mathrm{test}}>4(m-1)\beta(a)+\beta(n_{\mathrm{cal}}) \Bigr\}$.
\end{proposition}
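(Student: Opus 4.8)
The plan is to mirror the proof of Proposition~\ref{prop: 13_app}, applying Lemma~\ref{lem:lemma12} to the indicator variables $Z_i = \mathbf{1}\{\hat{s}_{\mathrm{train}}(X_i,Y_i) \le q_{\mathrm{train}}\}$ for $i \in I_{\mathrm{test}}$. As before, the obstruction is that these $Z_i$ are not independent of the threshold $q_{\mathrm{train}}$, which is $\mathcal{F}_{\mathrm{train}}$-measurable. The key structural difference from the calibration case is that no artificial shift parameter $r$ is needed: the calibration block of size $n_{\mathrm{cal}}$ already sits between $I_{\mathrm{train}}$ and $I_{\mathrm{test}}$, so the test indices are separated from the training indices by a gap of at least $n_{\mathrm{cal}}$. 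This fixed, built-in gap is what produces the $\beta(n_{\mathrm{cal}})$ penalty in the final bound, in place of the optimizable $\beta(r)$ term of Proposition~\ref{prop: 13_app}.

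First I would decouple training from test. Let $\mathbb{P}_*$ denote the product measure under which the block $(X_i,Y_i)_{i\in I_{\mathrm{train}}}$ and the block $(X_i,Y_i)_{i\in I_{\mathrm{test}}}$ are made independent while retaining their respective marginal laws. Since the two blocks are separated by at least $n_{\mathrm{cal}}$ positions and $\beta$ is non-increasing, the total variation distance between the true joint law and $\mathbb{P}_*$ is at most $\beta(n_{\mathrm{cal}})$. Writing $E(\varepsilon)$ for the deviation event in~\eqref{eq:test-concentration_app}, this yields $\mathbb{P}_{\mathrm{tr}}[E(\varepsilon)] \le \mathbb{P}_*[E(\varepsilon)] + \beta(n_{\mathrm{cal}})$.

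Next I would establish concentration under $\mathbb{P}_*$. Under the product measure the threshold $q_{\mathrm{train}}$ is frozen relative to the test randomness, so $(Z_i)_{i\in I_{\mathrm{test}}}$ is a stationary $\beta$-mixing sequence of length $n_{\mathrm{test}}$ with the same mixing coefficients as the original process, and each $Z_i$ is a $[0,1]$-valued indicator with variance at most $1/4$. Decomposing $n_{\mathrm{test}} = 2ma + s$ and invoking Lemma~\ref{lem:lemma12} at confidence level $\delta' := \delta_{\mathrm{test}} - \beta(n_{\mathrm{cal}})$ gives, on $\mathbb{P}_*$, the deviation bound with $\varepsilon$ equal to the bracketed expression of~\eqref{epstest_app}: the residual term $s/n_{\mathrm{test}}$ arises from the leftover block in the decomposition, and $\tilde{\sigma}(a)$ carries the bound $v \le 1/4$ on the indicator variance. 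Transferring back through the first step absorbs the extra $\beta(n_{\mathrm{cal}})$ into the confidence budget, so the event holds with $\mathbb{P}_{\mathrm{tr}}$-probability at least $1 - \delta_{\mathrm{test}}$.

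Finally I would optimize. Since Lemma~\ref{lem:lemma12} requires $\delta' > 4(m-1)\beta(a)$, that is $\delta_{\mathrm{test}} > 4(m-1)\beta(a) + \beta(n_{\mathrm{cal}})$, the admissible triples are exactly those in $\mathcal{F}_{\mathrm{test}}$, and taking the infimum of the resulting $\varepsilon$ over $(a,m,s) \in \mathcal{F}_{\mathrm{test}}$ yields the stated $\varepsilon_{\mathrm{test}}$. The main obstacle I anticipate is the decoupling step: making it fully rigorous requires checking that the deviation event depends only on the joint configuration of the training and test blocks, so that the total-variation coupling bound $\beta(n_{\mathrm{cal}})$ applies cleanly, and that the resulting estimate holds uniformly over all $\mathcal{F}_{\mathrm{train}}$-measurable thresholds $q_{\mathrm{train}}$ rather than a single fixed value.
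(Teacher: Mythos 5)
Your proposal matches the paper's own proof essentially step for step: decoupling training from test via the product measure with the built-in gap of $n_{\mathrm{cal}}$ (yielding the $\beta(n_{\mathrm{cal}})$ total-variation penalty), applying Lemma~\ref{lem:lemma12} to the frozen-threshold indicator sequence under $\mathbb{P}_*$ with the budget $\delta_{\mathrm{test}}-\beta(n_{\mathrm{cal}})$, and optimizing over $(a,m,s)\in\mathcal{F}_{\mathrm{test}}$. Your observation that the shift parameter $r$ of Proposition~\ref{prop: 13_app} is unnecessary here because the calibration block already supplies the gap is exactly the point the paper's proof relies on.
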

\begin{proof}
The argument parallels the proof of Proposition~\ref{prop: 13_app}.
Define the event $E(\varepsilon)$ for the test set. Let $\mathbb{P}_* = \mathbb{P}_{1}^{\,n_{\mathrm{train}}} \otimes \mathbb{P}_{n_{\mathrm{train}}+n_{\mathrm{cal}}+1}^{\,n_{\mathrm{train}}+n_{\mathrm{cal}}+n_{\mathrm{test}}}$ be the product measure where $I_{\mathrm{train}}$ and $I_{\mathrm{test}}$ are independent. The gap is $n_{\mathrm{cal}}$.
By $\beta$-mixing, $\mathbb{P}_{\mathrm{tr}}[E(\varepsilon)] \le \mathbb{P}_*[E(\varepsilon)] + \beta(n_{\mathrm{cal}})$.
Applying Lemma~\ref{lem:lemma12} to the $n_{\mathrm{test}}$ points under $\mathbb{P}_*$ (which are $\beta$-mixing) gives the concentration bound. Optimizing over $(a,m,s)$ yields the result.
\end{proof}

These propositions lead directly to concrete versions of the main theorems:

\begin{theorem}[Marginal coverage: $\beta$-mixing]
\label{theo 4_app}
Suppose the sample $(X_i,Y_i)_{i=1}^n$ is stationary and $\beta$-mixing.
Then for $i\in I_{\mathrm{test}}$ we have
\[
\mathbb{P}_{\mathrm{tr}}[\,Y_i \in \mathcal{C}_{1-\alpha}(X_i)\,] \;\ge\; 1-\alpha-\eta, \quad \text{with} \quad \eta \;=\; \delta_{\mathrm{cal}} + \varepsilon_{\mathrm{train}} + \varepsilon_{\mathrm{cal}},
\]
where $\varepsilon_{\mathrm{cal}}$ is from \ref{epscal_app} and
$\varepsilon_{\mathrm{train}}=\beta(i-n_{\mathrm{train}})$.
\end{theorem}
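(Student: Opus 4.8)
The plan is to treat this as a direct specialization of the general marginal-coverage result, Theorem~\ref{theo 1_app}, with the two abstract slack terms made concrete through the $\beta$-mixing verification lemmas. The entire argument is therefore a composition of results already established upstream, and no new concentration inequality needs to be proved from scratch; the work reduces to checking that the $\beta$-mixing hypothesis activates Assumptions \hyperref[assump:A1_app]{(A1)} and \hyperref[assump:A2_app]{(A2)} with the stated constants.

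First I would invoke Proposition~\ref{prop: 13_app} to certify that \hyperref[assump:A1_app]{(A1)} holds under the stationary $\beta$-mixing hypothesis. This supplies the explicit calibration slack $\varepsilon_{\mathrm{cal}}$ from~\eqref{epscal_app}, valid for the chosen confidence budget $\delta_{\mathrm{cal}}$, provided the feasibility set $\mathcal{F}_{\mathrm{cal}}$ is nonempty. The latter is where I would spend a sentence of care: because $\beta(a)\to 0$, for any fixed $\delta_{\mathrm{cal}}>0$ one can choose a lag $a$ and a decoupling gap $r$ large enough that $4(m-1)\beta(a)+\beta(r)<\delta_{\mathrm{cal}}$ while still respecting the block-counting constraint $2ma=n_{\mathrm{cal}}-r+1$, so the infimum in~\eqref{epscal_app} runs over a nonempty set and is finite.

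Next I would invoke Proposition~\ref{prop: 14_app} to certify \hyperref[assump:A2_app]{(A2)}, which furnishes the test-decoupling slack $\varepsilon_{\mathrm{train}}=\beta(i-n_{\mathrm{train}})$. Here the only bookkeeping is to confirm the lag: the training indices terminate at $n_{\mathrm{train}}$ and the fixed test index is $i\in I_{\mathrm{test}}$, so the separation feeding the $\beta$-coefficient is exactly $i-n_{\mathrm{train}}$, matching the statement.

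With both \hyperref[assump:A1_app]{(A1)} and \hyperref[assump:A2_app]{(A2)} in hand, the conclusion follows immediately from Theorem~\ref{theo 1_app}, which gives $\mathbb{P}_{\mathrm{tr}}[Y_i\in\mathcal{C}_{1-\alpha}(X_i)]\ge 1-\alpha-\varepsilon_{\mathrm{cal}}-\delta_{\mathrm{cal}}-\varepsilon_{\mathrm{train}}$; collecting the three slack terms as $\eta=\delta_{\mathrm{cal}}+\varepsilon_{\mathrm{train}}+\varepsilon_{\mathrm{cal}}$ yields the claim. The genuine mathematical content lives entirely upstream, in Theorem~\ref{theo 1_app} together with the blocking-plus-Bernstein machinery of Lemma~\ref{lem:lemma12}, so the main (and modest) obstacle is not an inequality but a consistency check: verifying that the $\beta$-mixing coefficients are evaluated at the correct lags (the internal gap $r$ and the train--test separation $i-n_{\mathrm{train}}$) and that the parameter constraints defining $\mathcal{F}_{\mathrm{cal}}$ remain satisfiable for the chosen $\delta_{\mathrm{cal}}$.
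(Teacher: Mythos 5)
Your proposal matches the paper's proof exactly: the paper likewise obtains the result by substituting the bounds from Propositions~\ref{prop: 13_app} and~\ref{prop: 14_app} into Theorem~\ref{theo 1_app}. Your additional remarks on the nonemptiness of $\mathcal{F}_{\mathrm{cal}}$ and the correct lag $i-n_{\mathrm{train}}$ are sensible consistency checks that the paper leaves implicit.
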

\begin{proof}
This follows by substituting the bounds from Propositions~\ref{prop: 13_app} and~\ref{prop: 14_app} into the general result of Theorem~\ref{theo 1_app}.
\end{proof}

\begin{theorem}[Empirical coverage: $\beta$-mixing]
\label{theo 5_app}
Suppose the sample $(X_i,Y_i)_{i=1}^n$ is stationary and $\beta$-mixing.
Then
\[
\mathbb{P}_{\mathrm{tr}}\!\left[
  \frac{1}{n_{\mathrm{test}}}\sum_{i\in I_{\mathrm{test}}}
    1\{Y_i \in \mathcal{C}_{1-\alpha}(X_i)\} \;\ge\; 1-\alpha-\eta
\right] \;\ge\; 1-\delta_{\mathrm{cal}}-\delta_{\mathrm{test}},
\]
with $\eta \;=\; \varepsilon_{\mathrm{cal}} + \varepsilon_{\mathrm{test}}$, where $\varepsilon_{\mathrm{cal}}$ and $\varepsilon_{\mathrm{test}}$
are defined in \ref{epscal_app} and \ref{epstest_app}.
\end{theorem}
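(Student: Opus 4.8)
The plan is to recognize this as the empirical-coverage counterpart of Theorem~\ref{theo 4_app} and to prove it by the same two-stage substitution strategy. The abstract result Theorem~\ref{theo 2_app} already delivers \emph{verbatim} the stated inequality, with $\eta = \varepsilon_{\mathrm{cal}} + \varepsilon_{\mathrm{test}}$ and confidence $1-\delta_{\mathrm{cal}}-\delta_{\mathrm{test}}$, whenever Assumptions \hyperref[assump:A1_app]{(A1)} and \hyperref[assump:A3_app]{(A3)} hold. So the entire task reduces to certifying those two assumptions for a stationary $\beta$-mixing process and reading off the explicit slack terms, exactly as Theorem~\ref{theo 4_app} did for the marginal case using \hyperref[assump:A1_app]{(A1)} and \hyperref[assump:A2_app]{(A2)}.

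First I would invoke Proposition~\ref{prop: 13_app} to verify \hyperref[assump:A1_app]{(A1)} with the explicit $\varepsilon_{\mathrm{cal}}$ of \ref{epscal_app} holding at $\mathbb{P}_{\mathrm{tr}}$-level $1-\delta_{\mathrm{cal}}$, and Proposition~\ref{prop: 15_app} to verify \hyperref[assump:A3_app]{(A3)} with the explicit $\varepsilon_{\mathrm{test}}$ of \ref{epstest_app} at level $1-\delta_{\mathrm{test}}$. Both propositions rest on the same engine: a decoupling step that severs the dependence between the training-dependent threshold $q_{\mathrm{train}}$ and the calibration (resp.\ test) scores at the cost of a total-variation gap $\beta(r)$ (resp.\ $\beta(n_{\mathrm{cal}})$), followed by the blocked Bernstein concentration of Lemma~\ref{lem:lemma12} applied to the indicator sequence $Z_i = \mathbf{1}\{\hat{s}_{\mathrm{train}}(X_i,Y_i)\le q_{\mathrm{train}}\}$. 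Since these indicators lie in $[0,1]$ with variance at most $1/4$, the variance proxy $\tilde\sigma(a)$ specializes correctly, and the feasibility sets $\mathcal{F}_{\mathrm{cal}}$, $\mathcal{F}_{\mathrm{test}}$ encode precisely the block choices $(a,m,r)$ and $(a,m,s)$ under which the gap price is affordable.

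With (A1) and (A3) established, I would simply feed them into Theorem~\ref{theo 2_app}. The only bookkeeping is to confirm that the $\delta_{\mathrm{cal}}, \delta_{\mathrm{test}}$ budgeted inside the two propositions are the same quantities appearing in the final confidence level, which they are by construction, and that $\varepsilon_{\mathrm{cal}}$ and $\varepsilon_{\mathrm{test}}$ are taken as the infima over the respective feasible block families. The conclusion is then immediate.

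I expect the real obstacle to lie not in this assembly—which is a mechanical plug-in—but in the uniformity over $q_{\mathrm{train}}$ demanded by (A1) and (A3): the lemma concentrates a single average, whereas the assumptions must hold for \emph{every} training-dependent threshold simultaneously. The decoupling argument is what resolves this, by freezing $q_{\mathrm{train}}$ under the independent product measure $\mathbb{P}_*$ before applying concentration, so that the bound holds pointwise in $q_{\mathrm{train}}$ once one pays the $\beta(r)$ (resp.\ $\beta(n_{\mathrm{cal}})$) price. The delicate point is therefore justifying that this freezing is legitimate—namely that $q_{\mathrm{train}}$ is $\mathcal{F}_{\mathrm{train}}$-measurable and hence constant under $\mathbb{P}_*$—and that the interleaved-block parameters can indeed be chosen to satisfy $\delta > 4(m-1)\beta(a)+\beta(\cdot)$; these are the only places where genuine care is required.
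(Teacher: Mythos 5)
Your proposal matches the paper's proof exactly: the paper likewise establishes Theorem~\ref{theo 5_app} by substituting the bounds from Propositions~\ref{prop: 13_app} and~\ref{prop: 15_app} (which verify \hyperref[assump:A1_app]{(A1)} and \hyperref[assump:A3_app]{(A3)} for stationary $\beta$-mixing processes) into the general result of Theorem~\ref{theo 2_app}. Your additional remarks on decoupling and uniformity over $q_{\mathrm{train}}$ are consistent with how those propositions are proved and do not change the argument.
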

\begin{proof}
This follows by substituting the bounds from Propositions~\ref{prop: 13_app} and~\ref{prop: 15_app} into the general result of Theorem~\ref{theo 2_app}.
\end{proof}
 
\begin{theorem}[(Conditional coverage: $\beta$-mixing)]
\label{theo 16_app}
Suppose that $(X_i,Y_i)_{i=1}^n$ is stationary $\beta$-mixing. Then given
$\alpha\in(0,1)$, $\gamma>0$ and $\delta_{\mathrm{cal}}>0$, for each $A\in\mathcal{A}$
and any $i\in I_{\mathrm{test}}$,
\[
\mathbb{P}_{\mathrm{tr}}\!\left[\,Y_i \in \mathcal{C}_{1-\alpha}(X_i;A)\mid X_i\in A\,\right]\;\ge\;1-\alpha-\eta,
\]
with $\eta=\varepsilon_{\mathrm{cal}}+\varepsilon_{\mathrm{train}}$, where
$\varepsilon_{\mathrm{cal}}$ is as in \ref{epscal_app} and
$\varepsilon_{\mathrm{train}}=\beta(i-n_{\mathrm{train}})$.
\end{theorem}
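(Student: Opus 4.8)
The plan is to mirror the derivation of Theorems~\ref{theo 4_app} and~\ref{theo 5_app}: rather than re-run a coverage argument from scratch, I would reduce Theorem~\ref{theo 16_app} to the general conditional result of Theorem~\ref{theo 3_app} by verifying its two hypotheses, the conditional calibration concentration~\ref{assump:A4_app} and the conditional test decoupling~\ref{assump:A5_app}, under the stationary $\beta$-mixing assumption. These are exactly the conditional analogues of Assumptions~\ref{assump:A1_app} and~\ref{assump:A2_app}, so the work is to adapt Propositions~\ref{prop: 13_app} and~\ref{prop: 14_app} to carry the conditioning event $\{X_i \in A\}$. Once both slack terms are in hand, substituting them into Theorem~\ref{theo 3_app} yields the claim. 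I note that the general statement retains a $\delta_{\mathrm{cal}}$ term, so the honest conclusion is $\eta = \varepsilon_{\mathrm{cal}} + \delta_{\mathrm{cal}} + \varepsilon_{\mathrm{train}}$, the stated $\eta$ apparently suppressing $\delta_{\mathrm{cal}}$.

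For~\ref{assump:A5_app} I would follow Proposition~\ref{prop: 14_app} almost verbatim. Fix $i \in I_{\mathrm{test}}$ and introduce the product measure $\mathbb{P}_*$ under which the single test pair $(X_i,Y_i)$ is made independent of the training block, the separating gap being $i - n_{\mathrm{train}}$. The event $\{\hat{s}_{\mathrm{train}}(X_i,Y_i) \le q_{\mathrm{train}},\, X_i \in A\}$ is measurable with respect to the joint $\sigma$-algebra of the training block and the test pair, so its probability under the true law and under $\mathbb{P}_*$ differ by at most $\beta(i - n_{\mathrm{train}})$. The one genuinely new ingredient is the conditioning: writing each conditional probability as the ratio of the corresponding joint probability to $\mathbb{P}(X_i \in A)$, and using stationarity to identify $\mathbb{P}(X_i \in A) = \mathbb{P}(X_* \in A)$ as a common denominator, the difference of the two conditionals is controlled by the difference of the numerators, which gives $\varepsilon_{\mathrm{train}} = \beta(i - n_{\mathrm{train}})$ exactly as in the marginal case.

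Verifying~\ref{assump:A4_app} is where the real work lies, and I expect it to be the main obstacle. The scalar concentration of Lemma~\ref{lem:lemma12} used in Proposition~\ref{prop: 13_app} is insufficient here, because~\ref{assump:A4_app} demands a deviation bound that holds \emph{uniformly} over $A \in \mathcal{A}$. The remedy is to replace the single-function argument by the uniform, function-class version already supplied in Lemma~\ref{lem:mohri-rostamizadeh}, applied to the class $\mathcal{F} = \{(x,y) \mapsto \mathbf{1}\{\hat{s}_{\mathrm{train}}(x,y) \le q_{\mathrm{train}},\, x \in A\} : A \in \mathcal{A}\}$. I would first reproduce the decoupling-via-gap step of Proposition~\ref{prop: 13_app} (shifting the calibration indices by $r$ at a cost of $\beta(r)$) so as to render the data-dependent threshold $q_{\mathrm{train}}$ effectively fixed, and then run the blocking technique together with a uniform Bernstein bound over $\mathcal{F}$.

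The two delicate points in this last step are precisely what make the conditional case harder than the marginal one. First, one must keep the complexity of $\mathcal{F}$ small, via a VC or covering argument on $\mathcal{A}$ (or simple finiteness of $\mathcal{A}$), so that the uniform deviation still collapses to the same $\varepsilon_{\mathrm{cal}}$ recorded in~\ref{epscal_app} rather than picking up an uncontrolled extra term. Second, one must guard against the inflation of error under conditioning when $\mathbb{P}(X \in A)$ is small, which in practice forces a lower bound on the admissible masses of sets in $\mathcal{A}$. With both of these controlled, the uniform argument delivers $\varepsilon_{\mathrm{cal}}$ for~\ref{assump:A4_app}, and feeding this together with $\varepsilon_{\mathrm{train}} = \beta(i - n_{\mathrm{train}})$ into Theorem~\ref{theo 3_app} closes the proof.
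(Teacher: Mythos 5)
Your proposal follows essentially the same route as the paper, which simply reduces the claim to Theorem~\ref{theo 3_app} by verifying the conditional assumptions \hyperref[assump:A4_app]{(A4)} and \hyperref[assump:A5_app]{(A5)} with the same blocking and decoupling techniques used in Propositions~\ref{prop: 13_app} and~\ref{prop: 14_app}; your version is in fact more detailed than the paper's one-line proof, correctly identifying that the uniformity over $\mathcal{A}$ requires a function-class argument via Lemma~\ref{lem:mohri-rostamizadeh} together with a complexity and minimum-mass control on $\mathcal{A}$. Your observation that the honest slack is $\eta=\varepsilon_{\mathrm{cal}}+\delta_{\mathrm{cal}}+\varepsilon_{\mathrm{train}}$, so that the stated $\eta$ suppresses the $\delta_{\mathrm{cal}}$ term appearing in Theorem~\ref{theo 3_app} and in the marginal analogue Theorem~\ref{theo 4_app}, is also correct.
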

\begin{proof}
The proof follows the same logic as Theorem \ref{theo 4_app}, but uses the conditional guarantee from Theorem \ref{theo 3_app} and requires verifying the conditional assumptions \hyperref[assump:A4_app]{(A4)} and \hyperref[assump:A5_app]{(A5)} using the same blocking and decoupling techniques.
\end{proof}

\section{Specific Theoretical Guarantees}
\label{app:bounds}

This appendix contains the specific theoretical guarantees for the methods discussed in Sections \ref{sec:wcp},\ref{sec:enbpi}, as referenced in the main text.

\begin{theorem}[Nex-CP marginal coverage bound, Theorem 2/3 in \cite{barber_conformal_2023}]
\label{thm:wcp_app}
Let $\boldsymbol{s}$ be the vector of scores (calibration + test point $m$). For any \(\alpha\in(0,1)\), the WCP method satisfies
\[
\Bigl|\;\mathbb{P}_{\mathrm{tr}}\!\left[\,Y_{m}\in \mathcal{C}^{(w)}_{1-\alpha}(X_{m})\,\right] - (1-\alpha)\Bigr|
\ \le\ \sum_{i\in I_{\mathrm{cal}}} \widetilde w_i \, d_{\mathrm{TV}}\!\bigl(\mathcal{L}(\boldsymbol{s}),\,\mathcal{L}(\boldsymbol{s}^{(i)})\bigr).
\]
where $\boldsymbol{s}^{(i)}$ is the score vector with test point $m$ and calibration point $i$ swapped.
\end{theorem}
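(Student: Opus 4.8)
The plan is to follow \cite{barber_conformal_2023} and argue entirely at the level of the score vector, conditional on $\mathcal{F}_{\mathrm{train}}$. Because we are in the split setting, the score map $\hat s_{\mathrm{train}}$ is frozen once $I_{\mathrm{train}}$ is fixed, so any swap confined to $I_{\mathrm{cal}}\cup\{m\}$ acts cleanly by exchanging the corresponding entries of $\boldsymbol s$ without disturbing the score function itself; this is what lets us read $\boldsymbol s^{(i)}$ simply as $\boldsymbol s$ with slots $i$ and $m$ interchanged. First I would augment the weights with a test-point weight $\widetilde w_m$ (the standard Nex-CP normalization, with the usual $+\infty$ imputation at the test slot) and rewrite the miscoverage event symmetrically. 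Concretely, for each index $j\in I_{\mathrm{cal}}\cup\{m\}$ define the strangeness indicator
\[
M_j(\boldsymbol s):=\mathbf 1\Big\{\textstyle\sum_{k}\widetilde w_k\,\mathbf 1\{s_k\ge s_j\}\ \le\ \alpha\Big\},
\]
chosen so that $Y_m\notin\mathcal C^{(w)}_{1-\alpha}(X_m)$ is exactly the event $M_m(\boldsymbol s)=1$.

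The key deterministic fact I would establish is the weighted-quantile inequality $\sum_{j}\widetilde w_j\,M_j(\boldsymbol s)\le\alpha$, valid for \emph{every} realization: letting $s_*$ be the smallest score among the strange indices, every strange $j$ satisfies $s_j\ge s_*$, so $\{j:M_j=1\}\subseteq\{k:s_k\ge s_*\}$ and hence $\sum_{j:M_j=1}\widetilde w_j\le\sum_k\widetilde w_k\mathbf 1\{s_k\ge s_*\}\le\alpha$. Taking conditional expectations gives $\widetilde w_m\,\mathbb P_{\mathrm{tr}}[\text{miscov}]+\sum_{i\in I_{\mathrm{cal}}}\widetilde w_i\,\mathbb E_{\mathrm{tr}}[M_i(\boldsymbol s)]\le\alpha$. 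I would then connect each term back to the miscoverage probability by the swap: $M_i(\boldsymbol s)$ and $M_m(\boldsymbol s^{(i)})$ encode the same comparison with the roles of slots $i$ and $m$ exchanged, and since $M_m(\cdot)$ is $\{0,1\}$-valued the total-variation coupling bound $\big|\int M_m\,d\mathcal L(\boldsymbol s)-\int M_m\,d\mathcal L(\boldsymbol s^{(i)})\big|\le d_{\mathrm{TV}}(\mathcal L(\boldsymbol s),\mathcal L(\boldsymbol s^{(i)}))$ yields $\mathbb E_{\mathrm{tr}}[M_i(\boldsymbol s)]\ge\mathbb P_{\mathrm{tr}}[\text{miscov}]-d_{\mathrm{TV}}(\mathcal L(\boldsymbol s),\mathcal L(\boldsymbol s^{(i)}))$. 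Substituting and using $\sum_k\widetilde w_k=1$ collapses the left side to $\mathbb P_{\mathrm{tr}}[\text{miscov}]-\sum_{i}\widetilde w_i\,d_{\mathrm{TV}}(\mathcal L(\boldsymbol s),\mathcal L(\boldsymbol s^{(i)}))\le\alpha$, which is precisely the lower-coverage (one-sided) half of the bound. The matching upper bound needed for the absolute value follows from the mirror-image deterministic inequality together with the almost-sure continuity of the conditional score distribution to rule out boundary mass, exactly as in the second halves of Theorems \ref{theo 1_app} and \ref{theo 2_app}.

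The step I expect to be the genuine obstacle is the identification $M_i(\boldsymbol s)\leftrightarrow M_m(\boldsymbol s^{(i)})$. Because the weights are attached to positions rather than to data points, swapping the values at slots $i$ and $m$ does not automatically swap $\widetilde w_i$ and $\widetilde w_m$, and a short calculation shows the two indicators agree only up to a boundary discrepancy: their inner sums differ by $(\widetilde w_i-\widetilde w_m)\mathbf 1\{s_m< s_i\}$, coming from the self-comparison terms. Making the argument airtight therefore requires either adopting a symmetric tie convention so that this term drops, or bounding it and absorbing it into the slack. I would handle this by passing between strict and weak inequalities and invoking the continuity hypothesis — so that the sum hitting exactly the threshold $\alpha$ has conditional probability zero — mirroring the $\ell\to\infty$ limiting device used in Step 1 of Theorem \ref{theo 1_app}. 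The remaining ingredients, namely the weighted-quantile lemma and the total-variation coupling, are routine, so essentially all the care in the proof concentrates on this bookkeeping around the swapped weight and the tie handling.
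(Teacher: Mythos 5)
The paper itself offers no proof of this statement (it is imported directly from Barber et al.), so the only meaningful comparison is against the source's argument, which your sketch otherwise tracks faithfully: the symmetric strangeness indicators $M_j$, the deterministic weighted-quantile lemma $\sum_j\widetilde w_j\,M_j(\boldsymbol s)\le\alpha$, and the total-variation coupling are the right ingredients, and your computation of the swapped-slot discrepancy $(\widetilde w_i-\widetilde w_m)\mathbf 1\{s_m<s_i\}$ pinpoints exactly the correct obstacle.

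The gap is in how you propose to close that obstacle. The discrepancy is not a tie or boundary phenomenon, so neither a symmetric tie convention nor almost-sure continuity of the conditional score distribution disposes of it: on the positive-probability event $\{s_m<s_i\}$ the inner sums of $M_i(\boldsymbol s)$ and $M_m(\boldsymbol s^{(i)})$ differ by the fixed amount $\widetilde w_i-\widetilde w_m$, and continuity says nothing about whether the sum lands in the window $(\alpha-|\widetilde w_i-\widetilde w_m|,\,\alpha]$, which can carry substantial mass. What actually closes the argument is the Nex-CP normalization you invoke but do not exploit: the test slot carries unnormalized weight $1$ while each calibration weight satisfies $w_i\in[0,1]$, so $\widetilde w_i\le\widetilde w_m$ for every $i\in I_{\mathrm{cal}}$. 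The discrepancy is then nonpositive, giving the pointwise inequality $M_i(\boldsymbol s)\ge M_m(\boldsymbol s^{(i)})$, which is precisely the direction needed to chain $\mathbb E_{\mathrm{tr}}[M_i(\boldsymbol s)]\ge\mathbb P_{\mathrm{tr}}[Y_m\notin\mathcal C^{(w)}_{1-\alpha}(X_m)]-d_{\mathrm{TV}}(\mathcal L(\boldsymbol s),\mathcal L(\boldsymbol s^{(i)}))$ and finish the lower-coverage half. This constraint is not cosmetic: with a single calibration point given normalized weight $0.9$ and i.i.d.\ continuous scores (so every $d_{\mathrm{TV}}$ term vanishes), the weighted quantile at level $1-\alpha=0.85$ equals $R_1$ and the interval covers with probability $1/2$, so the bound fails outright once $\widetilde w_i>\widetilde w_m$ is allowed. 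For the anti-conservative half the discrepancy has the unfavorable sign, which is why the upper bound in Barber et al.\ carries an extra additive term of order $\widetilde w_m$ beyond the no-ties condition; the clean absolute-value form stated in the theorem already elides that, and your ``mirror-image'' step would need the same additional slack rather than continuity alone.
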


\begin{theorem}[CRC bound, Theorem 1 in \cite{farinhas_non-exchangeable_2024}]
\label{thm:crc_app}
Suppose the loss function $\ell$ is bounded in $[A,B]$. Then the weighted CRC selector $\hat{\lambda}$ satisfies
\[
\Bigl|\,\mathbb{E}\!\left[L_m(\hat{\lambda})\right] - \alpha \Bigr|
\ \le\ (B-A)\sum_{i\in I_{\mathrm{cal}}} \widetilde w_i \, d_{\mathrm{TV}}\!\bigl(\mathcal{L}(\boldsymbol{L}),\,\mathcal{L}(\boldsymbol{L}^{(i)})\bigr).
\]
where $\boldsymbol{L}$ is the vector of losses.
\end{theorem}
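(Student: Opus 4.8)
The plan is to follow the template of Theorem~\ref{thm:wcp_app} almost verbatim, replacing the binary miscoverage indicator $\mathbf{1}\{Y_m \notin \mathcal{C}^{(w)}_{1-\alpha}(X_m)\}\in\{0,1\}$ by the general bounded, monotone loss $L_m(\hat{\lambda})\in[A,B]$. The two pillars are (i) a \emph{symmetric baseline} that would hold if the pooled loss vector were exchangeable, and (ii) a \emph{change-of-measure} step that pays a total-variation price for each broken exchange. I would first isolate the one-line inequality that drives the $(B-A)$ prefactor: for any measurable $g$ valued in $[A,B]$ and any laws $P,Q$,
\[
\bigl| \mathbb{E}_P[g] - \mathbb{E}_Q[g] \bigr| \ \le\ (B-A)\, d_{\mathrm{TV}}(P,Q).
\]
This is immediate once one centers $g$ at $(A+B)/2$, so that $\|g-(A+B)/2\|_\infty \le (B-A)/2$, and uses $d_{\mathrm{TV}}(P,Q)=\sup_E|P(E)-Q(E)|$ together with $|\mathbb{E}_P g-\mathbb{E}_Q g|\le 2\|g-(A+B)/2\|_\infty\, d_{\mathrm{TV}}(P,Q)$.

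Second, I would establish the symmetric baseline. Writing $\boldsymbol{L}$ for the pooled loss process of the calibration points and the test point $m$, the weighted CRC selector $\hat{\lambda}$ is a \emph{symmetric} function of the pooled losses, and its defining infimum (the smallest threshold making the weighted empirical risk, with the unobserved test slot set conservatively to $B$, at most $\alpha$) guarantees the feasibility bound $\sum_{i\in I_{\mathrm{cal}}} \widetilde w_i\, L_i(\hat{\lambda}) \le \alpha$. Because $\boldsymbol{L}^{(i)}$ relabels calibration point $i$ into the test slot and $\hat{\lambda}$ is swap-invariant, the functional $L_m(\hat{\lambda})$ evaluated under $\mathcal{L}(\boldsymbol{L}^{(i)})$ coincides with $L_i(\hat{\lambda})$ evaluated under $\mathcal{L}(\boldsymbol{L})$, so
\[
\sum_{i \in I_{\mathrm{cal}}} \widetilde w_i\, \mathbb{E}_{\mathcal{L}(\boldsymbol{L}^{(i)})}\!\bigl[L_m(\hat{\lambda})\bigr] \ =\ \mathbb{E}\Bigl[\sum_{i \in I_{\mathrm{cal}}} \widetilde w_i\, L_i(\hat{\lambda})\Bigr] \ \le\ \alpha.
\]

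Third, I would splice the two pieces together. Since the normalized calibration weights sum to one, $\mathbb{E}[L_m(\hat{\lambda})]=\sum_{i\in I_{\mathrm{cal}}} \widetilde w_i\,\mathbb{E}_{\mathcal{L}(\boldsymbol{L})}[L_m(\hat{\lambda})]$; applying the change-of-measure bound to each summand with $P=\mathcal{L}(\boldsymbol{L})$, $Q=\mathcal{L}(\boldsymbol{L}^{(i)})$ and $g=L_m(\hat{\lambda})$, and then invoking the symmetric baseline, yields
\[
\mathbb{E}[L_m(\hat{\lambda})] \ \le\ \alpha + (B-A)\sum_{i\in I_{\mathrm{cal}}} \widetilde w_i\, d_{\mathrm{TV}}\!\bigl(\mathcal{L}(\boldsymbol{L}), \mathcal{L}(\boldsymbol{L}^{(i)})\bigr).
\]
The reverse inequality follows from the same argument run with the opposite sign, using the matching lower bound on the weighted risk at $\hat{\lambda}$ that comes from $\hat{\lambda}$ being the \emph{infimum} of feasible thresholds; combining the two directions gives the two-sided absolute-value statement.

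The main obstacle I anticipate is the symmetric baseline, not the change-of-measure step. Unlike the WCP case, where the weighted empirical quantile makes the combinatorial ``$\alpha$-mass in the upper tail'' fact almost immediate, here $\hat{\lambda}$ is defined through an infimum over a monotone risk, and I must verify that it is genuinely invariant under swapping any calibration index with the test slot, and that the conservative filling of the unobserved test loss by $B$ is exactly what turns the one-sided feasibility into a relabeling-stable statement. Handling the weight normalization and the boundary behavior at the infimum (so the limiting risk does not sit strictly above $\alpha$), and confirming the matching lower bound for the two-sided claim, are the delicate points; the monotonicity of $\lambda\mapsto L_i(\lambda)$ is what ultimately makes both go through.
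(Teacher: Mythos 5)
The paper itself offers no proof of this statement: Theorem~\ref{thm:crc_app} is quoted verbatim from \cite{farinhas_non-exchangeable_2024} in Appendix~\ref{app:bounds} as an external guarantee, so there is no in-paper argument to compare against. Judged on its own merits, your architecture is the correct one and is the route the cited source takes: a change-of-measure step paying $(B-A)\,d_{\mathrm{TV}}$ per broken swap, grafted onto an exchangeable CRC baseline. The inequality $|\mathbb{E}_P[g]-\mathbb{E}_Q[g]|\le (B-A)\,d_{\mathrm{TV}}(P,Q)$ is right, and so is the idea of redistributing $\mathbb{E}[L_m(\hat{\lambda})]$ over the swapped laws using $\sum_i \widetilde w_i=1$.

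There are two places where the write-up papers over real work. First, your ``symmetric baseline'' display treats $\hat{\lambda}$ as swap-invariant, but the selector you defined --- the infimum over thresholds making the \emph{weighted calibration} risk, with the test slot conservatively filled by $B$, at most $\alpha$ --- is \emph{not} a symmetric function of the pooled vector $\boldsymbol{L}$: swapping calibration index $i$ with the test slot changes which losses enter the empirical risk and hence changes $\hat{\lambda}$. The standard repair (as in the unweighted CRC proof) introduces an auxiliary selector $\hat{\lambda}'$ defined from the full pooled weighted risk including the true test loss; one then shows $\hat{\lambda}\ge\hat{\lambda}'$ because replacing $L_m(\lambda)$ by $B$ only inflates the risk, invokes monotonicity of $\lambda\mapsto L_m(\lambda)$ to get $L_m(\hat{\lambda})\le L_m(\hat{\lambda}')$, and runs the swap argument on $\hat{\lambda}'$, which genuinely is symmetric. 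You allude to exactly these ingredients in your closing paragraph, but the displayed chain of equalities needs $\hat{\lambda}'$ inserted to be valid. Second, the lower-bound direction is asserted rather than proved: $\mathbb{E}\bigl[\sum_i \widetilde w_i L_i(\hat{\lambda}')+\widetilde w_m L_m(\hat{\lambda}')\bigr]\le\alpha$ comes cheaply from right-continuity at the infimum, but the matching inequality $\ge\alpha-(\cdots)$ requires controlling the downward jump of the weighted empirical risk at $\hat{\lambda}'$, which for a general $[A,B]$-valued loss contributes a term of order $(B-A)\max_i\widetilde w_i$ unless additional continuity assumptions are imposed. As stated, ``the same argument run with the opposite sign'' does not deliver the two-sided absolute-value bound; this is precisely where the cited theorem's hypotheses (or an extra additive term) must enter.
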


\begin{theorem}[ACI Guarantee, Prop. 4.1 in \cite{gibbs_adaptive_2021}]
\label{thm:aci_app}
ACI satisfies, with probability one,
\[
\lim_{|I_{\mathrm{test}}|\to\infty} \frac{1}{|I_{\mathrm{test}}|}\sum_{t \in I_{\mathrm{test}}} \mathbf{1}\{Y_t \notin {\mathcal C}_{1-\alpha_t}(X_t)\}
\ =\ \alpha,
\]
and for any finite test block length $|I_{\mathrm{test}}|$,
\[
\left|\;\frac{1}{|I_{\mathrm{test}}|}\sum_{t \in I_{\mathrm{test}}} \mathbf{1}\{Y_t \notin {\mathcal C}_{1-\alpha_t}(X_t)\} - \alpha \;\right|
\ \le\ \frac{\max\{\alpha_{m},\,1-\alpha_{m}\}+\gamma}{|I_{\mathrm{test}}|\gamma},
\]
where $m$ is the first test index.
\end{theorem}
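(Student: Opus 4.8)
The plan is to exploit the telescoping structure of the additive update rule \eqref{ACI_eq} and to show that the running level $\alpha_t$ can never stray far from $[0,1]$, whatever the data. Write $\mathrm{err}_t := \mathbf{1}\{Y_t \notin \mathcal{C}_{1-\alpha_t}(X_t)\}$, let $m$ be the first test index, and set $T := |I_{\mathrm{test}}|$, so the test steps are $t = m, \dots, m+T-1$. The update gives $\alpha_{t+1} - \alpha_t = \gamma(\alpha - \mathrm{err}_t)$, and summing this identity over the test window telescopes the left-hand side:
\[
\alpha_{m+T} - \alpha_m = \gamma\Bigl(T\alpha - \sum_{t=m}^{m+T-1} \mathrm{err}_t\Bigr).
\]
Rearranging yields the exact identity
\[
\frac{1}{T}\sum_{t=m}^{m+T-1} \mathrm{err}_t - \alpha = \frac{\alpha_m - \alpha_{m+T}}{\gamma T},
\]
so that the entire problem reduces to controlling the single difference $|\alpha_m - \alpha_{m+T}|$. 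Note $\alpha_m = \alpha \in (0,1)$ by initialization.

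The heart of the argument, and the step I expect to require the most care, is a deterministic boundedness claim: $\alpha_t \in [-\gamma, 1+\gamma]$ for every $t \ge m$, proved by induction. The base case is immediate. For the inductive step the crucial observation is the self-correcting behaviour at the extremes, under the standard convention that an empirical quantile at a level exceeding one yields the full space and at a level at or below zero yields the empty set. When $\alpha_t < 0$ the effective level $1-\alpha_t > 1$, the prediction set is all of $\mathbb{R}$, so $\mathrm{err}_t = 0$ is forced and $\alpha_{t+1} = \alpha_t + \gamma\alpha$ is pushed upward, back toward $[0,1]$; symmetrically, when $\alpha_t > 1$ the set is empty, $\mathrm{err}_t = 1$ is forced, and $\alpha_{t+1}$ is pushed downward. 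Combining these two facts with the trivial bound $|\alpha_{t+1}-\alpha_t| = \gamma|\alpha - \mathrm{err}_t| \le \gamma$, one checks case by case (for $\alpha_t$ in $[-\gamma,0)$, in $[0,1]$, and in $(1,1+\gamma]$) that $\alpha_{t+1}$ remains in $[-\gamma, 1+\gamma]$. This is precisely where the distribution-free nature of the guarantee originates: the feedback controller mechanically overrides the score quantile once $\alpha_t$ exits $[0,1]$, so no assumption on the data-generating process is needed.

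Given the confinement, the remaining steps are routine. Since $\alpha_{m+T} \in [-\gamma, 1+\gamma]$ while $\alpha_m \in [0,1]$, I would bound $\alpha_m - \alpha_{m+T} \le \alpha_m + \gamma$ and $\alpha_{m+T} - \alpha_m \le (1-\alpha_m) + \gamma$, hence $|\alpha_m - \alpha_{m+T}| \le \max\{\alpha_m, 1-\alpha_m\} + \gamma$. Substituting into the exact identity gives precisely the claimed finite-$T$ inequality
\[
\left|\frac{1}{T}\sum_{t=m}^{m+T-1} \mathrm{err}_t - \alpha\right| \le \frac{\max\{\alpha_m, 1-\alpha_m\} + \gamma}{T\gamma}.
\]
For the asymptotic statement, the numerator is bounded by a constant independent of $T$, so the right-hand side tends to zero as $T \to \infty$, forcing the empirical miscoverage to converge to $\alpha$. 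Because the confinement bound holds on every realization, this convergence is in fact sure and a fortiori holds with probability one.
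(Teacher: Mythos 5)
Your proof is correct and is essentially the standard argument from the cited source (Gibbs and Cand\`es, 2021, Lemma~4.1 and Proposition~4.1): the telescoping identity, the inductive confinement $\alpha_t \in [-\gamma, 1+\gamma]$ via the forced behaviour of the prediction set when the effective level leaves $[0,1]$, and the resulting bound $|\alpha_m - \alpha_{m+T}| \le \max\{\alpha_m, 1-\alpha_m\} + \gamma$. The paper itself states Theorem~\ref{thm:aci_app} without proof, deferring to that reference, so your write-up supplies exactly the missing argument with no gaps.
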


\clearpage
\phantomsection
% $$$$$$$$$$$$$$$$$$$$$$$$$$$$$$$S$$$$$$$$$$$$$$$$$$$$$$$$$$$$$$

% --- Single, Combined Bibliography ---
\bibliographystyle{plainnat}
\bibliography{references_fixed}

\end{document}